\newtheorem{theorem}{Theorem}
\newtheorem{lemma}{Lemma}
\newtheorem{corollary}{Corollary}
\def\ScaleIfNeeded{
	\ifdim\Gin@nat@width>\linewidth \linewidth \else \Gin@nat@width
	\fi } \makeatother
\begin{document}
\title{Robust Resource Allocation for STAR-RIS Assisted SWIPT Systems}

\author{
	Guangyu~Zhu,
	Xidong~Mu,~\IEEEmembership{Member,~IEEE,}
	Li~Guo,~\IEEEmembership{Member,~IEEE,}
	Ao~Huang,
	and Shibiao~Xu,~\IEEEmembership{Member,~IEEE}
		
\thanks{Part of this work was presented at the IEEE International Conference on Communications (ICC), Rome, Italy, May 28– June 1, 2023.\cite{Zhu_conference}}
\thanks{Guangyu Zhu, Li Guo, Ao Huang and Shibiao Xu are with the Key Laboratory of Universal Wireless Communications, Ministry of Education, Beijing University of Posts and Telecommunications, Beijing 100876, China, also with the School of Artificial Intelligence, Beijing University of Posts and Telecommunications, Beijing 100876, China, also with the Engineering Research Center of Ministry of Education for Chain Network Convergence Technology, and also with the National Engineering Research Center for Mobile Internet Security Technology, Beijing University of Posts and Telecommunications, Beijing 100876, China (email:\{Zhugy, guoli, huangao, shibiaoxu\}@bupt.edu.cn).}
\thanks{Xidong Mu is with the School of Electronic Engineering and Computer Science, Queen Mary University of London, London E1 4NS, U.K. (e-mail:\{xidong.mu\}@qmul.ac.uk).}
}
	
\maketitle
\begin{abstract}
A simultaneously transmitting and reflecting reconfigurable intelligent surface (STAR-RIS) assisted simultaneous wireless information and power transfer (SWIPT) system is proposed. More particularly, an STAR-RIS is deployed to assist in the information/power transfer from a multi-antenna access point (AP) to multiple single-antenna information users (IUs) and energy users (EUs), where two practical STAR-RIS operating protocols, namely energy splitting (ES) and time switching (TS), are employed. Under the imperfect channel state information (CSI) condition, a multi-objective optimization problem (MOOP) framework, that simultaneously maximizes the minimum data rate and minimum harvested power, is employed to investigate the fundamental rate-energy trade-off between IUs and EUs. To obtain the optimal robust resource allocation strategy, the MOOP is first transformed into a single-objective optimization problem (SOOP) via the $\epsilon$-constraint method, which is then reformulated by approximating semi-infinite inequality constraints with the S-procedure. For ES, an alternating optimization (AO)-based algorithm is proposed to jointly design AP active beamforming and STAR-RIS passive beamforming, where a penalty method is leveraged in STAR-RIS beamforming design. Furthermore, the developed algorithm is extended to optimize the time allocation policy and beamforming vectors in a two-layer iterative manner for TS. Numerical results reveal that: 1) deploying STAR-RISs achieves a significant performance gain over conventional RISs, especially in terms of harvested power for EUs; 2) the ES protocol obtains a better user fairness performance when focusing only on IUs or EUs, while the TS protocol yields a better balance between IUs and EUs; 3) the imperfect CSI affects IUs more significantly than EUs, whereas TS can confer a more robust design to attenuate these effects.
\end{abstract}
\begin{IEEEkeywords}
	Reconfigurable intelligent surfaces, simultaneous transmission and reflection, simultaneous wireless information and power transfer, resource allocation, imperfect CSI.
\end{IEEEkeywords}
\section{Introduction}
With the rapid growth in the number of Internet-of-Things (IoT) devices, the sustainability of equipment and energy supply has become one of the bottlenecks restricting the development of next generation wireless networks \cite{Zhang_energy,Buzzi_energy}. To address this issue, simultaneous wireless information and power transfer (SWIPT) as a promising technique has been investigated in \cite{SWIPT,Zeng_SWIPT}. In the SWIPT system, as the carrier of both information and energy, radio-frequency (RF) signals can realize the parallel information exchange and energy supply for IoT devices. In this case, balancing wireless communication and energy transfer becomes an important criterion in the SWIPT system \cite{ZR1}. However, due to the drastically different power sensitivities to receiver and transmission environments, wireless power transfer (WPT) has a significant efficiency gap compared to wireless information transfer (WIT). Especially over a long communication distance, the propagation loss will seriously reduce the efficiency of WPT, which poses a huge challenge for resource allocation to balance the performance of information users (IUs) and energy users (EUs). In this connection, smart antenna technologies are introduced into the SWIPT system to improve user performance \cite{Ding_MIMO}. At the same time, however, it should be noted that the increase in active antennas will be accompanied by more energy consumption, higher design complexity, and higher hardware costs. This virtually brings new obstacles to implementation in practice. 
	
Recently, reconfigurable intelligent surfaces (RISs)\cite{RIS_survey} have emerged as a key rising technology for future communication networks, thanks to their impressive performance in terms of energy efficiency \cite{Huang_EE}. An RIS is a two-dimensional (2D) surface, comprising large numbers of low-cost and passive meta-materials with tunable reflection properties. Ideally, each reconfigurable element can independently adjust the phase shift and amplitude of the incident signals according to the practical transmission needs, thus facilitating the creation of “Smart Radio Environments (SREs)” \cite{SRE}. Besides, due to the near-passive operation and low hardware footprints, RIS assisted system is more cost-efficient and  deployment-flexible compared to conventional active antenna systems.
Given these appealing features, the combination of RISs and SWIPT has attracted increasing attention \cite{SWIPT_Application}.
	
However, limited by their reflection-only nature, RISs can only serve transmitters and receivers located in the same side. This strict geographical restriction severely undermines user fairness and the effectiveness of introducing RISs. To break this limitation and thus achieve \emph{full}-space SREs, a novel concept of RISs, i.e., simultaneously transmitting and reflecting RISs (STAR-RISs), has been proposed \cite{Mu_star}. Unlike conventional reflecting-only RISs, STAR-RISs can not only reflect but also transmit the incident signals via dynamic configuration adjustment for all users located in both sides of STAR-RISs \cite{Liu_360}. Furthermore, STAR-RISs can exploit extra degrees of freedom (DoFs) for system design \cite{STAR-RIS_XU}, which may enable more flexible wireless resource allocation to coordinate the performance between WPT and WIT for SWIPT systems.
	
\subsection{Prior Works}
1) \emph{Studies on Resource Allocation for RIS Assisted SWIPT}: Enlightened by the enormous potential gains of RISs to transmission efficiency, the new research paradigm of resource allocation for RIS assisted SWIPT has been extensively studied recently in \cite{Wu_Weighted,Pan_MIMO,Wu_QoS,Zargari2021joint,Xu_NL,Zargari_Robust,Multi_objective_IRS,Waveform,XU_large}. In particular, the authors of \cite{Wu_Weighted} proposed a single-objective optimization problem (SOOP) for maximizing the weighted sum-power, subject to the individual signal-to-interference-plus-noise ratio (SINR) constraints for IUs in RIS aided multiple input single output (MISO) SWIPT systems. On the contrary, the maximization problem for weighted sum-rate, subject to the total harvested power constraints for EUs in RIS assisted multiple input multiple output (MIMO) SWIPT systems was considered in \cite{Pan_MIMO}. In other aspects, under the quality-of-service (QoS) requirements, the authors of \cite{Wu_QoS} proposed to optimize the joint active and passive beamforming for transmit power minimization in a multiple RISs assisted SWIPT system. The authors further proposed two low-complexity suboptimal algorithms based on maximum ratio transmission (MRT) and zero-forcing (ZF) beamforming techniques to minimize the transmit power at the BS in \cite{Zargari2021joint}. Besides, the authors of \cite{Xu_NL} investigated a similar transmit power minimization problem for a large-scale RIS aided SWIPT system.
Considering further the non-linear harvested model, the authors of \cite{Zargari_Robust} employed a block coordinate descent (BCD) method to jointly optimize the active beamforming and passive beamforming for minimization of transmit power under the imperfect channel state information (CSI) assumption.
However, the above studies only focus on either data rate or harvested power. To better portray the conflict between IUs and EUs in SWIPT, the rate-energy trade-offs were investigated in\cite{Multi_objective_IRS,Waveform,XU_large}. Specifically, a multi-objective optimization problem (MOOP) framework was studied by the authors of \cite{Multi_objective_IRS}, where energy/information beamforming vectors at the BS and phase shifts at the RIS were jointly optimized to obtain the fundamental trade-off between sum-rate and total harvested energy. In \cite{Waveform}, the authors jointly optimized waveform, active and passive beamforming by applying a BCD method to achieve a rate-energy region. Besides, the authors of \cite{XU_large} adopted a realistic-based RIS model in a RIS assisted MISO multiuser SWIPT system, where the trade-off between IUs and EUs was investigated by a penalty-based algorithm. 
	
2) \emph{Studies on STAR-RIS Assisted Communication}: There have been some early efforts to capitalize on the benefits of STAR-RIS deployment in wireless communication systems \cite{Mu_star,Wu_resource,Zhao_UAV,MEC,STAR_SWIPT}. In \cite{Mu_star}, the authors studied a STAR-RIS assisted two-user downlink MISO system and proposed three different protocols, i.e., energy splitting (ES), mode switching (MS), and time switching (TS), for practical operation. To explore the advantages of each protocol, the authors further study the power consumption minimization problems of unicast and multicast on different protocols. In \cite{Wu_resource}, the authors considered a STAR-RIS aided multi-carrier communication network, where the resource allocation design was achieved for both non-orthogonal multiple access (NOMA) and orthogonal multiple access (OMA) via jointly optimizing the channel assignment, power allocation, and the STAR-RIS configuration. The authors of \cite{Zhao_UAV} studied a STAR-RIS aided unmanned aerial vehicle (UAV) system and proposed a novel distributionally-robust reinforcement learning algorithm to get a robust design for sum-rate maximization. In \cite{MEC}, resource allocation for STAR-RIS assisted wireless powered mobile edge computing (MEC) systems was investigated, which illustrated that TS is a more suitable protocol for uplink transmission in MEC. Besides, the STAR-RIS assisted SWIPT was introduced in \cite{STAR_SWIPT}, where the achievable worst-case sum secrecy rate was maximized by applying a convex approximation method.
	
\subsection{Motivations and Contributions}
Although the introduction of RISs brings significant performance improvements to SWIPT systems, the \emph{half}-space coverage feature of reflecting-only RISs greatly limits its applicability in practical implementation. Benefiting from the simultaneous transmission and reflection characteristics, the STAR-RIS is envisaged as a potential technology to break this fundamental limitation. By leveraging enhanced DoFs, the STAR-RIS can not only improve SWIPT system performance, but also realize \emph{full}-space coverage for both IUs and EUs. Inspired by this, we propose to introduce the STAR-RIS into the SWIPT system in this paper. To the best of our knowledge, few efforts have been devoted to this topic. Therefore, there are several urgent issues that need to be explored before STAR-RISs can be harmoniously integrated into the existing SWIPT systems.
	
Firstly, thanks to the \emph{full}-space coverage, the distribution of served users can be more random and dispersed in a STAR-RIS assisted SWIPT system. However, this potential variation may magnify the channel differences among users, thus posing a more intractable resource allocation challenge to balance the performance between IUs and EUs. More importantly, the addition of STAR-RISs will introduce more optimization variables that are not considered in the conventional reflecting-only RIS assisted SWIPT system. Accordingly, the formulated optimization problems will be more intractable, which is beyond the capabilities of the methods applied in \cite{Multi_objective_IRS,Waveform,XU_large}. As such, efficient algorithms need to be developed to resolve the resource allocation problem for STAR-RIS assisted SWIPT systems.
	
Secondly, most of the aforementioned researches \cite{Wu_resource,Zhao_UAV,MEC} assumed that the perfect CSI of all channels was available to the base station (BS) or access point (AP). However, since there are more channels associated with passive STAR-RISs, more overhead and advanced channel estimation algorithms than conventional reflecting-only RISs need to be invested in the acquisition of accurate CSI. This is a very challenging but still infancy task \cite{Chanel_estimation}. Therefore, in order to reduce the impact of channel uncertainty on system performance in practical scenarios, it is necessary to investigate a robust design to deal with channel estimation errors.
	
Thirdly, the authors of \cite{Mu_star} proposed three different operating protocols for STAR-RIS and revealed their respective benefits for different communication scenarios and requirements. Interestingly, EUs and IUs have different power sensitivities to signal beams, transmission environments, and inter-user interference in the SWIPT system. Bearing all this in mind, a new but crucial question, i.e., \emph{Which protocol is more suitable for the SWIPT system?} is naturally raised. Note that the authors of \cite{STAR_SWIPT} only considered the ES protocol for STAR-RIS assisted SWIPT and did not answer this question, which is still an open issue and needs to be answered.
	
Motivated by these observations, we investigate the robust resource allocation for a STAR-RIS assisted SWIPT system based on the imperfect CSI assumption. In particular, two operating protocols, i.e., ES and TS, are considered. The main contributions of this paper are summarized as follows:
\begin{itemize}
	\item We propose a STAR-RIS assisted SWIPT system, where an STAR-RIS is deployed to assist a multi-antenna AP to simultaneously transmit information and power to two types of single-antenna users, i.e., IUs and EUs. From a practical perspective, we consider the imperfect CSI for all channels. In order to investigate the fundamental trade-off between IUs and EUs, we formulate MOOPs for both ES and TS to simultaneously maximize the minimum data rate for IUs and the minimum harvested power among EUs. 
	
	\item For ES, we first leverage the $\epsilon$-constraint method to transform the resulting MOOP into a more tractable SOOP. Then, we adopt the general S-procedure to approximate the semi-infinite inequality constraints for any given $\epsilon$. Finally, to solve the reformulated highly-coupled non-convex SOOP, we develop an efficient alternating optimization (AO) framework by solving the active AP beamforming and passive STAR-RIS beamforming in an iterative manner. In particular, a penalty-based method is leveraged to relax the rank-one constraint for STAR-RIS beamforming optimization. Besides, to ensure the feasibility of $\epsilon$-constraint method, we apply bisection search to obtain the performance upper boundary of IUs and determine the reasonable range for $\epsilon$.
		
	\item For TS, we still apply the $\epsilon$-constraint method and general S-procedure to transform the MOOP. Subsequently, we extend the proposed algorithm for ES into a two-layer optimization algorithm to solve the reformulated SOOP. In the outer-layer iteration, we determine the optimal time allocation via one-dimensional search. While for the inner-layer, we update the remaining variables by utilizing the AO-based algorithm in each iteration.
		
	\item Our numerical results depict that 1) deploying STAR-RISs can achieve a significant rate-energy region gain to conventional RISs in SWIPT systems, especially in terms of harvested power; 2) ES is more attractive for improving user fairness when focusing only on IUs or EUs, while TS is superior at balancing the performance between IUs and EUs. More importantly, TS can provide a more robust design under the imperfect CSI; 3) the deployment strategy of having all EUs on the same side of the STAR-RIS while all IUs on the other side yields higher performance gains, but also results in more implementation difficulties.
\end{itemize}
\subsection{Organization and Notations}
The rest paper is organized as follows: Section II introduces the system model and the joint beamforming MOOP formulations for both ES and TS protocols. In Section III, efficient solutions are proposed for robust resource allocation based on each operating protocol. Next, the numerical results are presented to verify the benefits of deploying the STAR-RIS in a SWIPT system in Section IV. Finally, Section V concludes the paper.
	
\emph{Notations}: Scalars, vectors, and matrices are denoted by lower-case, bold lower-case letters, and bold upper-case letters, respectively. $\|\mathbf{a}\| $ denotes the Euclidean norm of vector $\mathbf{a}$, $\mathbf{a}^H$ and $\mathbf{a}^T$ denotes the conjugate transpose and transpose of vector $\mathbf{a}$, respectively. For a square matrix $\mathbf{A}$, $\rm{Tr}(\mathbf{A})$, $\mathrm{Rank}(\mathbf{A})$, $\mathbf{A}^H$, $\mathbf{A}^T$ denote its trace, rank, conjugate transpose, and transpose, respectively, while $\mathbf{A} \succeq 0$ represents that $\mathbf{A}$ is a positive semidefinite matrix. And $\mathrm{Diag}\left(\mathbf{A}\right)$ denotes a vector whose elements are extracted from the main diagonal elements of matrix $\mathbf{A}$. $\|\mathbf{A}\|_*$, $\|\mathbf{A}\|_2$,  $\|\mathbf{A}\|_F$ denote the nuclear norm, spectral norm, and Frobenius norm of matrix $\mathbf{A}$, respectively. $\mathbb{C}^{M \times N}$ and $\mathbb{R}^{M\times N}$ denote the space of $M \times N$ complex valued matrices and real valued matrices respectively. $\mathbf{I}^{M\times M}$ denotes an identity matrix of size $M \times M$. $\otimes$ represents the Kronecker product. Finally, the distribution of a circularly symmetric complex Gaussian (CSCG) random variable with mean $\mu$ and variance $\sigma$ is denoted by $\mathcal{CN}(\mu, \sigma^2)$.
\section{System Model and Problem Formulation}
\begin{figure}
	\setlength{\abovecaptionskip}{0cm}   
	\setlength{\belowcaptionskip}{0.2cm}   
	\setlength{\textfloatsep}{7pt}
	\centering
	\includegraphics[width=3.5in]{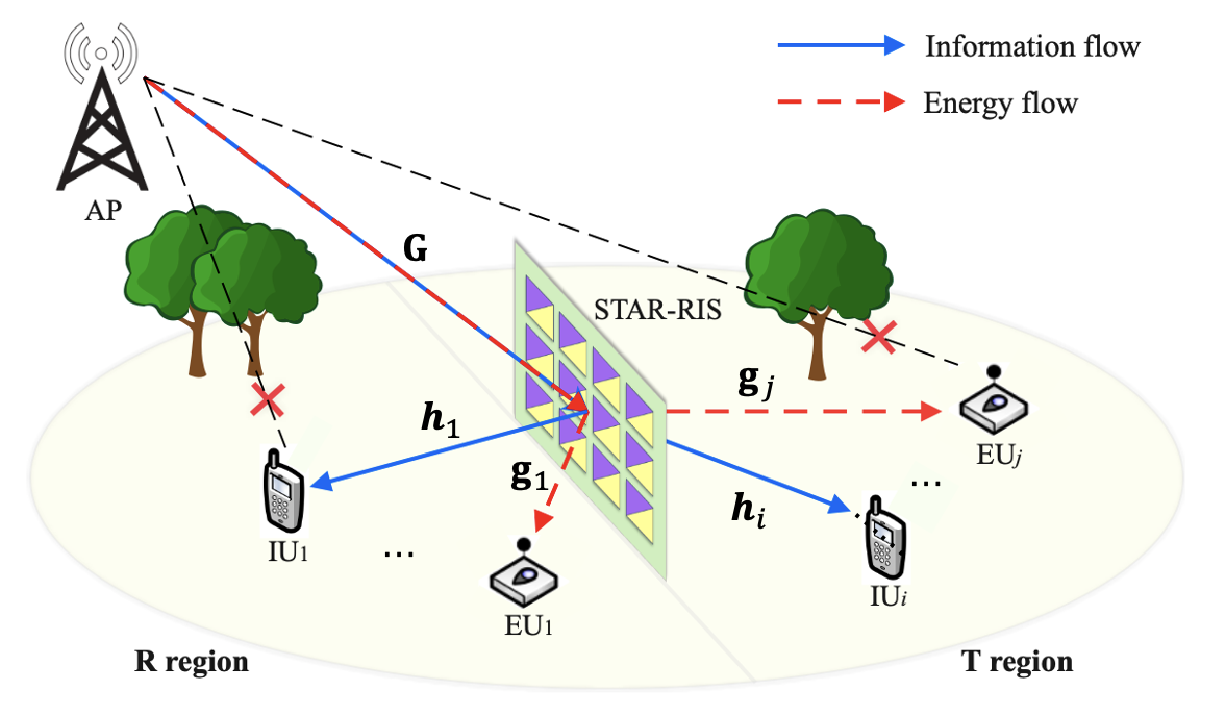}
	\caption{Illustration of the STAR-RIS assisted SWIPT systems}
	\label{system}
\end{figure}
As shown in Fig. \ref{system}, we consider a STAR-RIS assisted wireless system, where an STAR-RIS consisting of $M$ elements is deployed to facilitate SWIPT from the AP with $N$ antennas to two different sets of single-antenna users, i.e., IUs and EUs. The sets of IUs and EUs are denoted by $\mathcal{K_I}=\{1,\cdots,K_I\}$ and $\mathcal{K_E}=\{1,\cdots,K_E\}$, respectively. We assume these users are randomly located in both the reflection (R) and transmission (T) regions of the STAR-RIS. Without loss of generality, we assume the direct links between the AP and users are blocked by obstacles, and the communication for users in blind spots is supported by the STAR-RIS enabled transmission and reflection links \cite{Wu_resource,Zhao_UAV,MEC,STAR_SWIPT}. The quasi-static flat-fading model is assumed for all channels\footnote{In this paper, we consider the sub-6GHz frequencies and moderate size of AP and STAR-RIS  arrays \cite{Wu_resource,Zhang_STAR_NOMA}. Therefore, the simple far-field channel model is a safe approximation.}, and the channel coefficients from the AP to the STAR-RIS, from the STAR-RIS to IU $i$, and from the STAR-RIS to EU $j$ are denoted by $\mathbf{G}\in \mathbb{C}^{M\times N}$, $\mathbf{h}_{i}\in \mathbb{C}^{M\times 1}$, and $\mathbf{g}_{j}\in\mathbb{C}^{M\times 1}$, respectively. 
	
\subsection{STAR-RIS Protocols and Models}
In this paper, we consider the ES and TS protocols for STAR-RISs proposed in \cite{Mu_star}.

\emph{1) The ES protocol}: All elements of the STAR-RIS can split an incident signal into two independent signals, i.e., reflected and transmitted signals. Based on energy splitting, the beamforming matrix is divided into the reflection-coefficient part and the transmission-coefficient part, denoted by $\mathbf{\Theta}^{\textup{ES}}_r=\textup{diag}\left(\sqrt{\beta_1^{r}}e^{j\theta^r_1},\cdots,\sqrt{\beta_M^{r}}e^{j\theta^r_M}\right)$ and $\mathbf{\Theta}^{\textup{ES}}_t=\textup{diag}\left(\sqrt{\beta_1^{t}}e^{j\theta^t_1},\cdots,\sqrt{\beta_M^{t}}e^{j\theta^t_M}\right)$, respectively. Here, $\beta_m^{r},\beta_m^{t}$ $\in[0,1]$ and $\theta^r_m, \theta^t_m$ $\in[0,2\pi),\forall m\in \mathcal{M}=\{1,2,\cdots,M\}$ denote the amplitude and phase shift coefficients of the $m$-th element, respectively. Subject to the law of energy conservation, $\beta^{r}_m+\beta^{t}_m=1, \forall m\in\mathcal{M}$ is satisfied\footnote{In this paper, to determine the maximum performance gain, we investigate the beamforming design under the assumption of the phase shift coefficients for transmission and reflection can be adjusted independently. However, the proposed schemes can also be applied to the case where the phase shifts are coupled at the STAR-RIS with proper modifications \cite{Wang_coupled}.}. 

\emph{2) The TS protocol}: All elements of the STAR-RIS simultaneously complete the reflection/transmission of the signal in different orthogonal time durations. Therefore, the reflection-coefficient matrix and transmission-coefficient matrix are independent and given by
$\mathbf{\Theta}_r^{\textup{TS}}\!\!=\!\textup{diag}\left(e^{j\theta^r_1},\cdots,e^{j\theta^r_M}\right)$ and $\mathbf{\Theta}_t^{\textup{TS}}\!\!=\!\textup{diag}\left(e^{j\theta^t_1},\cdots,e^{j\theta^t_M}\right)$, respectively, where $\theta^r_m,\theta^t_m\in[0,2\pi), \forall m\in \mathcal{M}$. Let $\lambda_r$, $\lambda_t\in[0,1]$ denote the time allocation for reflection and transmission, the constraint $\lambda_r+\lambda_t=1$ holds for the whole communication period.

\subsection{Signal Transmission Model}
In this paper, we consider a linear transmit precoding at the AP and assume that each IU $i$ and EU $j$ are assigned with one dedicated information/energy beam, which are denoted by $\mathbf{w}_i$ and $\mathbf{v}_j$, respectively. Consequently, the transmitted signal at the AP is written as
\begin{align}
	\mathbf{x}=\sum_{i\in\mathcal{K_I}}\mathbf{w}_ix_i^I+\sum_{j\in\mathcal{K_E}}\mathbf{v}_jx_j^E,
\end{align}
where $x_i^I$ is the information-bearing signal for IU $i$, which is satisfied with $x_i^I \sim \mathcal{CN}(0,1), \forall i \in \mathcal{K_I}$, and $x_j^E$ is the energy-carrying signal for EU $j$ with $\mathbb{E}(|x_j^E|^2)=1, \forall j \in \mathcal{K_E}$. For simplification, we assume the information beam and energy beam are independent. Then we have
\begin{align}	\mathbb{E}\{\mathbf{x}^H\mathbf{x}\}=\sum_{i\in\mathcal{K_I}}\|\mathbf{w}_i\|^2+\sum_{j\in\mathcal{K_E}}\| \mathbf{v}_j \|^2 \leq P_{\max},
\end{align}
where $P_{\max}$ is the power budget at the AP.
	
\emph{1) ES}: When the ES protocol is employed, the received signal at IU $i$ is given by 
\begin{align}
	y_{i}^I=\mathbf{h}_{i}^{H}\mathbf{\Theta}^{\textup{ES}}_{s_i}\mathbf{G}\mathbf{x}+n_i, \forall i \in \mathcal{K_I},
\end{align}
where $s_i\in\{t,r\}$ indicates the region where IU $i$ is located, and $s_i=t$ if the user is located in T region, otherwise, $s_i=r$.
$n_i\sim \mathcal{CN}(0,\sigma^2)$ denotes the additive white Guassion noise at IU $i$. Since the energy beam is a Gaussian pseudo-random sequence and carries no information, we assume it can be decoded by the receiving IU and removed from the received signal, as properly adopted in \cite{Xu_model}. Therefore, the SINR of IU $i$ is given as
\begin{align}
	\textup{SINR}_i=\frac{|\left(\mathbf{u}_{s_i}^{\textup{ES}}\right)^{H}\mathbf{H}_i\mathbf{w}_i|^2}{\sum_{k\in\mathcal{K_I},k\ne i}|\left(\mathbf{u}_{s_i}^{\textup{ES}}\right)^{H}\mathbf{H}_i\mathbf{w}_k|^2+\sigma^2},
\end{align}
where $\mathbf{u}^{\textup{ES}}_{s_i}=[\sqrt{\beta_1^{s_i}}e^{j\theta^{s_i}_1},\cdots,\sqrt{\beta_M^{s_i}}e^{j\theta^{s_i}_M}]^H,{s_i}\in\{t,r\}$, $\mathbf{H}_i=\mathrm{diag}(\mathbf{h}_{i})\mathbf{G}$ denotes as the cascaded channel from the AP to IU $i$. 
	
On the other hand, due to the characteristics of energy broadcasting, all information and energy beams are desirable for each EU. At this point, the energy harvesting (EH) of EUs is virtually unaffected by interference and noise \cite{Xu_model}. Thus, based on the linear EH model\footnote{Although the non-linear EH model can capture the saturation effect more precisely for a single EH circuit \cite{NL_basic}, multi-parallel EH circuits effectively rectify the non-linear effect and cause a large linear conversion region \cite{Ma_NL}. Inspired by this, we adopt the linear EH model to investigate the fundamental design insights for STAR-RIS assisted SWIPT systems. Besides, by replacing the EH model with the non-linear model proposed in \cite{Xu_NL} and \cite{Zargari_NL}, our optimization framework is applicable, but some modifications still need to be investigated, which are our future focus topics.} \cite{Wu_Weighted,Chen_linear}, the received RF energy/power\footnote{In this paper, we use the unit time of 1 second to measure system performance. Thus, the terms “power” and “energy” are interchangeable.} at EU $j$ is denoted as $P_j$, given by 
\begin{align}
	P_j\!=\!\eta\!\left(\!\sum_{i\in\mathcal{K_I}}\!\!\big|\!\left(\mathbf{u}_{s_j}^{\textup{ES}}\right)^{H}\!\!\!\mathbf{G}_j\mathbf{w}_i\big|^2\!\!+\!\!\!\sum_{k\in\mathcal{K_E}}\!\!\big|\!\left(\mathbf{u}_{s_j}^{\textup{ES}}\right)^{H}\!\!\!\mathbf{G}_j\mathbf{v}_k\big|^2\!\!\right),\! \forall j \!\in\! \mathcal{K_E},
\end{align}
where $\mathbf{G}_j=\textup{diag}(\mathbf{g}_{j})\mathbf{G}$ denotes as the cascaded channel from the AP to EU $j$, $s_j \in\{t,r\}$ represents the located region for EU $j$. $\eta \in(0,1]$ represents the energy conversion efficiency for all EUs. For ease of analysis, we set $\eta=1$ in the subsequent context.

\emph{2) TS:} When the TS protocol is employed, the AP can communicate adaptively with users located in T and R regions in different time slots, so the interference among IUs from different regions can be ignored.
For ease of identification and description, we re-denote $\mathbf{w}^s_i$ and $\mathbf{v}^s_j, s \in \{t,r\}$ as the beamforming for IU $i$ and EU $j$, respectively. Note that $\mathbf{w}^r_i=0$ if IU $i$ is located in the T region, and $\mathbf{w}^t_i=0$ if IU $i$ is located in the R region. The same guidelines also apply to $\mathbf{v}^s_j$ for EU $j$. In this way, the SINR at IU $i$ and the harvested power of EU $j$ are rewritten as \eqref{TS1} and \eqref{TS2}, respectively, which is shown at the top of the next page,
\begin{figure*}[!t]
	\normalsize
	\begin{subequations}
		\begin{align}
			\label{TS1} \textup{SINR}^s_i \!=\!\frac{|\left(\mathbf{u}_{s}^{\textup{TS}}\right)^{H}\mathbf{H}_i\mathbf{w}^s_i|^2}{\sum_{k\in\mathcal{K_I},k\ne i}|\left(\mathbf{u}_{s}^{\textup{TS}}\right)^{H}\mathbf{H}_i\mathbf{w}^s_k|^2\!+\!\sigma^2}, \forall i\in \mathcal{K_I}, 
			s =\!\! \! \begin{cases}
				r,\text{if IU $i$ located in R region}, \\
				t,\text{if IU $i$ located in T region}.
			\end{cases} \\
			\label{TS2} P^s_j  \! \!=\!\!\!\sum_{i\in\mathcal{K_I}}\!\left|\left(\mathbf{u}_s^{\textup{TS}}\right)^{H}\!	\!\!\mathbf{G}_j\mathbf{w}^s_i\right|^2\!\!\!\!+\!\!\!\sum_{k\in\mathcal{K_E}}\!\!\left|\left(\mathbf{u}_s^{\textup{TS}}\right)^{H}\!\!\!\mathbf{G}_j\mathbf{v}^s_k\right|^2\! \!,\! \forall j  \! \in\! \mathcal{K_E}, \!
			s \!\!= \!\! \begin{cases}
				r,\text{if EU $j$ located in R region}, \\
				t,\text{if EU $j$ located in T region}.
			\end{cases}
		\end{align}
	\end{subequations}
	\hrulefill \vspace*{0pt}
\end{figure*}
where $\mathbf{u}_{s}^{\textup{TS}}=[e^{j\theta^{s}_1},\cdots,e^{j\theta^{s}_M}]^H,{s}\in\{t,r\}$.
	
\subsection{CSI Error Model}
Due to the passive property and channel complexity of the STAR-RIS, it is challenging to acquire the accurate CSI of the links associated with the STAR-RIS. To consider this effect on resource allocation design for the proposed system, we adopt a practical bounded CSI error model, where all possible channel errors are modeled by a bounded set \cite{Yu_Robust,Pan_framework}\footnote{In this paper, we exploit the flexible and general bounded CSI error model to characterize the CSI uncertainty caused by factors of the CSI acquisition process in the STAR-RIS assisted system. Under this model, the robust resource allocation for the worst case is designed, which with appropriate modifications can also be further extended for STAR-RIS assisted SWIPT adopting other CSI error models \cite{Pan_framework}.}. To be more specific, we directly rewrite the cascaded channels from the AP to IUs/EUs via the STAR-RIS with the following models:
\begin{gather}
	\mathbf{H}_i=\widehat{\mathbf{H}}_i+\triangle \mathbf{H}_i, \forall i \in \mathcal{K_I},\\
	\mathcal{H}_i \triangleq\{\triangle \mathbf{H}_i\in \mathbb{C}^{M\times N} :\|\triangle \mathbf{H}_i\|_{F} \leq \varepsilon_i\},\\
	\mathbf{G}_j=\widehat{\mathbf{G}}_j+\triangle \mathbf{G}_j, \forall j \in \mathcal{K_E},\\
	\mathcal{G}_j\triangleq\{\triangle \mathbf{G}_j\in \mathbb{C}^{M\times N} :\|\triangle \mathbf{G}_j\|_{F} \leq \mu_j\},
\end{gather}
where $\widehat{\mathbf{H}}_i,\forall  i \in \mathcal{K_I}$ and $\widehat{\mathbf{G}}_j,\forall j \in \mathcal{K_E}$ are estimations of the corresponding channel $\mathbf{H}_i$ and $\mathbf{G}_j$, respectively\footnote{Specially, the channel estimation for our proposed system can utilize the method applied in \cite{Wu_estimation,Xu_estimation}.}, $\triangle \mathbf{H}_i$ and $\triangle \mathbf{G}_j$ denote the channel estimation errors. Sets $\mathcal{H}_i$ and $\mathcal{G}_j$ collect all possible channel estimation errors, $\varepsilon_i$ and $\mu_j$ denote the maximum threshold for the norms of the CSI estimation error vectors $\triangle \mathbf{H}_i$ and $\triangle \mathbf{G}_j$, respectively.
	
\subsection{Problem Formulation}
In this paper, our goal is to simultaneously maximize the minimum data rate for IUs and harvested power for EUs to guarantee fairness among all users under imperfect CSI. For the conflict between two objectives, we apply a MOOP framework to elaborately design the robust resource allocation. More specifically, we consider the following MOOP formulations based on two different protocols for STAR-RIS operation, i.e., ES and TS.

\emph{1) MOOP Formulation for ES}: To begin with, we only focus on maximizing the minimum harvested power among EUs, by jointly optimizing the active beamforming $\{\mathbf{w}_i\}$, $\{\mathbf{v}_j\}$ and passive configuration $\{\mathbf{u}^{\textup{ES}}_s\}$. Under imperfect CSI, the robust design problem is formulated as
\begin{subequations}\label{P1}
	\begin{align}
		&\max_{{\{\mathbf{w}_i\},\{\mathbf{v}_j\},\{\mathbf{u}_{s}^{\textup{ES}}\}}} \min_j \ P_j \nonumber \\ 
		&\label{P1_C1}\quad \ \ {\rm s.t.} \sum_{i\in\mathcal{K_I}} \|\mathbf{w}_i\|^2+\sum_{j\in\mathcal{K_E}}\|\mathbf{v}_j\|^2 \leq P_{\max},\\ 
		&\label{P1_C2}\quad \ \quad \quad \theta^s_m \in [0,2\pi), \forall m \in \mathcal{M}, s\in\{t,r\},\\
		&\label{P1_C3}\quad \ \quad \quad \beta_m^t, \beta_m^r\in [0,1], \ \beta_{m}^t+\beta_{m}^r=1, \forall m\in \mathcal{M}.
	\end{align}
\end{subequations}
where constraint \eqref{P1_C1} represents the total transmit power budget at the AP. \eqref{P1_C2} and \eqref{P1_C3} are the constraints on phase shift and amplitude for STAR-RIS elements, respectively. Next, with imposing the same constraints, the minimum data rate maximization problem for IUs is formulated as
\begin{subequations}\label{P2}
	\begin{align}
		\nonumber &\  \max_{{\{\mathbf{w}_i\},\{\mathbf{v}_j\},\{\mathbf{u}_{s}^{\textup{ES}}\}}} \min_i  \log_2(1+\textup{SINR}_i)\\ 
		&\quad\quad \ \ {\rm s.t.}\ \eqref{P1_C1}-\eqref{P1_C3}.
	\end{align}
\end{subequations}
Subsequently, the MOOP for ES based on problems \eqref{P1} and \eqref{P2} is formulated in the following:
\begin{subequations}\label{P3}
	\begin{align}
		\textup{(Q1)}: &\ \max_{{\{\mathbf{w}_i\},\{\mathbf{v}_j\},\{\mathbf{u}_{s}^{\textup{ES}}\}}} \min_j \ P_j \nonumber \\ 
		\textup{(Q2)}: & \ \max_{{\{\mathbf{w}_i\},\{\mathbf{v}_j\},\{\mathbf{u}_{s}^{\textup{ES}}\}}} \min_i \ \log_2(1+\textup{SINR}_i)\\ 
		&\quad\quad \ \ {\rm s.t.}\ \eqref{P1_C1}-\eqref{P1_C3}.
	\end{align}
\end{subequations}
	
\emph{2) MOOP Formulation for TS}:
Unlike in ES, where all users can access the entire communication time, TS introduces time allocation variables $\lambda_r$ and $\lambda_t$ for different regional users. The STAR-RIS will only provide services to users located in the corresponding region during the allocated time. Therefore, the MOOP framework for TS is formulated as 
\begin{subequations}\label{P4}
	\begin{align}
		\textup{(Q3)}:\nonumber&\  \max_{{\{\mathbf{w}^s_i\},\{\mathbf{v}^s_j\},\{\mathbf{u}_s^{\textup{TS}}\}}} \min_{i} \ {\lambda_s}P^s_j\\
		\textup{(Q4)}:\nonumber&\  \max_{{\{\mathbf{w}^s_i\},\{\mathbf{v}^s_j\},\{\mathbf{u}_s^{\textup{TS}}\}}} \min_{j} \ {\lambda_s}\log_2(1+\textup{SINR}^s_i)\\ 
		&\label{P4_C1}\ \ {\rm s.t.}\ \lambda_r\left(\sum_{i\in\mathcal{K_I}} \|\mathbf{w}^r_i\|^2+\!\sum_{j\in\mathcal{K_E}}\|\mathbf{v}^r_j\|^2\right) \nonumber \\
		&\quad \quad+\!\lambda_t\left(\sum_{i\in\mathcal{K_I}} \|\mathbf{w}^t_i\|^2+\!\!\!\sum_{j\in\mathcal{K_E}}\|\mathbf{v}^t_j\|^2\right)\! \leq\! P_{\max},\\ 
		&\label{P4_C2}\quad \quad \ \theta^s_m \!\in\! [0,2\pi), \forall m \!\in\! \mathcal{M}, s\!\in\!\{t,r\},\\
		&\label{P4_C3}\quad \quad \ \lambda_r,\lambda_t \in [0,1], \ \lambda_r+\lambda_t=1.
	\end{align}
\end{subequations}
Similar to problem \eqref{P3}, optimization objectives $\textup{(Q3)}$ and $\textup{(Q4)}$ denote the minimum harvested power and minimum data rate maximization, respectively. \eqref{P4_C1} and \eqref{P4_C2} are the constraints of the power budget at AP and the phase shift of the STAR-RIS, respectively. Besides, \eqref{P4_C3} is a new constraint on time allocation introduced by the TS protocol.
	
Observe that both \eqref{P3} and \eqref{P4} are tricky problems that are challenging to solve directly. The main causes are as follows, briefly stated: 1) Since the two optimization objective functions of each problem are conflicting, there is no available resource allocation strategy to maximize the two goals simultaneously; 2) all optimization variables in the objective functions are closely coupled, which leads to the highly non-convexity for optimization problems; and 3) owing to the uncertainty of CSI, all objective functions have infinite possibilities, which are infeasible to handle in polynomial time. To sum up, the combination of the above factors makes it impossible to find an existing algorithm that can be directly applied to our formulated MOOPs. To circumvent this issue, we will explore efficient algorithms to solve the robust resource allocation problems \eqref{P3} and \eqref{P4} in the next. 
	
\section{Solutions of Robust Resource Allocation Problems}
In this section, we first propose an $\epsilon$-constraint based AO algorithm to design the joint beamforming for ES. Then, the algorithm is further extended to optimize the time allocation policy and beamforming vectors with a two-layer iterative method for TS.
\subsection{Proposed Solution for ES}
To start with, we investigate the robust resource allocation problem for ES by jointly optimizing the active beamforming (i.e., $\{\mathbf{w}_i\}$and $\{\mathbf{v}_j\}$) at the AP and passive beamforming (i.e., $\{\mathbf{u}^{\textup{ES}}_s\}$) of the STAR-RIS. Inspired by the fact that the $\epsilon$-constraint method can generate the whole Pareto frontier of the two conflict objective values by setting different $\epsilon$ \cite{epsilon_method}, we first adopt the $\epsilon$-constraint method \cite{IBFD}, which transforms $\textup{(Q2)}$ into a constraint associated with $\epsilon$ and specifies $\textup{(Q1)}$ as the unique objective function. Accordingly, the original MOOP is reduced to the form of SOOP as follows:
\begin{subequations}\label{P5}
	\begin{align}
		\nonumber&\  \max_{{ \{\mathbf{w}_i\},\{\mathbf{v}_j\},\{\mathbf{u}^{\textup{ES}}_s\}}} \min \{P_j | j \in \mathcal{K_E}\}\\ 
		&\label{P5_C1}\quad \quad\  {\rm s.t.} \ \log_2 \left(1\!+ \frac{|\left(\mathbf{u}^{\textup{ES}}_{s_i}\right)^H\mathbf{H}_i\mathbf{w}_i|^2}{\sum_{k\in\mathcal{K_I},k\ne i}|\left(\mathbf{u}^{\textup{ES}}_{s_i}\right)^H\mathbf{H}_i\mathbf{w}_k|^2\!+\!\sigma^2}\right)\!\!\geq \!\epsilon, \nonumber \\
		&\quad \quad\quad \quad	\ \forall i \in \mathcal{K_I}, {s_i}\in\{t,r\} ,\\
		&\label{P5_C2}\quad \quad\quad \quad \ \eqref{P1_C1}-\eqref{P1_C3},
	\end{align}
\end{subequations}
where constraint \eqref{P5_C1} indicates the minimum data rate requirements for IUs. It can be shown that the constraint will get equality at the optimal solution to problem \eqref{P5}. Otherwise, more communication resources can be shifted from IUs to EUs to improve the objective function while ensuring that \eqref{P5_C1} holds. Therefore, different trade-offs between IUs and EUs will result from different $\epsilon$ values. By solving a series of SOOPs corresponding to the ergodic value of $\epsilon$, the entire Pareto boundary of the trade-off region for original MOOP can be properly characterized.
	
However, the high coupling of variables and the infinite possibility of channel errors remain the main obstacles to solving SOOP \eqref{P5}. Let us define $\mathbf{W}_i \triangleq\mathbf{w}_i\mathbf{w}_i^{H}, \forall i \in \mathcal{K_I}$, $\mathbf{V}_j\triangleq\mathbf{v}_j\mathbf{v}_j^{H}, \forall j \in \mathcal{K_E}$ as well as $\mathbf{U}^{\textup{ES}}_s\triangleq\mathbf{u}^{\textup{ES}}_s\left(\mathbf{u}^{\textup{ES}}_s\right)^H$. Then, by introducing an auxiliary variable $\eta$, which satisfies $P_j \geq \eta, \forall j \in \mathcal{K_E}$, the considered problem \eqref{P5} can be transformed into the following equivalent rank-constrained SDP:
\begin{subequations}\label{P6}
	\begin{align}
		\nonumber&\  \max_{{\{\mathbf{W}_i\},\{\mathbf{V}_j\},\{\mathbf{U}^{\textup{ES}}_s\},\eta}} \;\eta\\ 
		&\label{P6_C1}\quad \ {\rm s.t.}\ \mathrm{Tr}\left(\left(\sum_{i\in\mathcal{K_I}}\mathbf{W}_i+\sum_{k\in\mathcal{K_E}}\mathbf{V}_k\right)\mathbf{G}^H_j\mathbf{U}^{\textup{ES}}_{s_j}\mathbf{G}_j\right)\geq \eta, \nonumber \\
		&\quad \quad\ \quad  \forall j \in \mathcal{K_E}, \forall {s_j}\in\{t,r\},\\
		&\label{P6_C2}\quad \quad\ \quad \mathrm{Tr}\left(\left(\frac{\mathbf{W}_i}{\Gamma} \!-\sum_{k\in\mathcal{K_I},k\ne i}\!\mathbf{W}_k\right)\mathbf{H}^H_i\mathbf{U}^{\textup{ES}}_{s_i}\mathbf{H}_i\right)\geq \sigma^2, \nonumber \\
		&\quad \quad\ \quad  \forall i \in \mathcal{K_I}, \forall {s_i}\in\{t,r\},\\
		&\label{P6_C3}\quad \quad\ \quad \sum_{i\in\mathcal{K_I}} \mathrm{Tr}(\mathbf{W}_i)+\sum_{j\in\mathcal{K_E}} \mathrm{Tr}(\mathbf{V}_j) \leq P_{\max},\\ 
		&\label{P6_C4}\quad \quad \ \quad \mathbf{W}_i \succeq 0, \mathrm{Rank}(\mathbf{W}_i)=1,  \forall i \in \mathcal{K_I},\\
		&\label{P6_C5}\quad \quad \ \quad \mathbf{V}_j \succeq 0, \mathrm{Rank}(\mathbf{V}_j)=1,  \forall j \in \mathcal{K_E}, \\
		&\label{P6_C6}\quad \quad\ \quad \mathbf{U}^{\textup{ES}}_s\succeq 0, \mathrm{Rank}\left(\mathbf{U}^{\textup{ES}}_s\right)=1, \forall s\in\{t,r\}, \\
		&\label{P6_C7}\quad \quad\ \quad \mathrm{Diag}   \left(\mathbf{U}^{\textup{ES}}_s\right)=\boldsymbol{\beta}^s, \forall s\in\{t,r\},\\
		&\label{P6_C8}\quad \quad\ \quad \beta_m^t, \beta_m^r\in [0,1], \beta_{m}^t+\beta_{m}^r=1, \forall m\in \mathcal{M},
	\end{align}
\end{subequations}
where $\Gamma=2^\epsilon-1$ denotes the minimum required SINR for IUs, and $\boldsymbol{\beta}^s \triangleq [\beta^s_1,\beta^s_2,\cdots,\beta^s_M], \forall s \in \{t,r\}$ denotes the amplitude adjustment vector. According to the identity $\mathrm{Tr}(\mathbf{A}^H\mathbf{B}\mathbf{C}\mathbf{D})=\textup{vec}(\mathbf{A})^{H}(\mathbf{D}^T\otimes\mathbf{B})\textup{vec}(\mathbf{C})$, \eqref{P6_C1} and \eqref{P6_C2} can be further expressed as
\begin{align}
	&\label{EU}\text{vec}(\mathbf{G}_j)^{H}\left(\left(\sum_{i\in\mathcal{K_I}}\mathbf{W}_i+\sum_{k\in\mathcal{K_E}}\mathbf{V}_k\right)^T \! \otimes\!\mathbf{U}^{\textup{ES}}_{s_j}\right) \textup{vec}(\mathbf{G}_j)\geq \eta, \nonumber \\
	&\quad \quad \quad \quad \forall j \in \mathcal{K_E}, \forall {s_j}\in\{t,r\}, \\
	&\label{IU}\text{vec}(\mathbf{H}_i)^{H}\left(\left(\frac{\mathbf{W}_i}{\Gamma}-\sum_{\substack{k \in \mathcal{K_I},\\k \neq i}}\!\mathbf{W}_k\right)^T \! \otimes\!\mathbf{U}^{\textup{ES}}_{s_i}\right) \textup{vec}(\mathbf{H}_i)\geq \sigma^2, \nonumber \\
	&\quad \quad \quad \quad \forall i \in \mathcal{K_I},  \forall {s_i}\in\{t,r\}.
\end{align}
Due to the uncertainty of $\mathbf{H}_i$ and $\mathbf{G}_j$, there are an infinite number of such constraints \eqref{EU} and \eqref{IU} in problem \eqref{P6}. Next, we will draw on the following lemma to deal with this issue.
\begin{lemma}\label{S-Procedure}
	\emph{(General S-Procedure \cite{convex}): Define the quadratic functions of the variable $\mathbf{x} \in \mathbb{C}^{N\times 1}$:
		\begin{align}
			f_i(\mathbf{x})=\mathbf{x}^H\mathbf{A}_i\mathbf{x}+2\textup{Re}\{\mathbf{x}^H\mathbf{b}_i\}+c_i, \forall i=1,\cdots,K, \nonumber
		\end{align}
		where $\mathbf{A}_i$ is the complex symmetric matrix, i.e., $\mathbf{A}_i=\mathbf{A}_i^{H}$. Then the $\{f_i(\mathbf{x})\geq 0\}_{i=1}^{K}\Rightarrow{f_0({\mathbf{x})\geq 0}}$ holds if and only if there exists $\forall i, \lambda_i \geq 0$ satisfying with 
		\begin{align}
			\begin{pmatrix}
				\mathbf{A}_0&\mathbf{b}_0\\
				\mathbf{b}_0^{H}&c_0
			\end{pmatrix}-
			\sum_{i=1}^{K}\lambda_i\begin{pmatrix}
				\mathbf{A}_i&\mathbf{b}_i\\
				\mathbf{b}_i^{H}&c_i
			\end{pmatrix} \succeq 0.\nonumber
		\end{align}
	}
\end{lemma}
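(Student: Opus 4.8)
\emph{Proof proposal.} Write $\mathbf{M}_i \triangleq \begin{pmatrix} \mathbf{A}_i & \mathbf{b}_i \\ \mathbf{b}_i^H & c_i \end{pmatrix}$ for $i=0,1,\ldots,K$ and, for $\mathbf{x}\in\mathbb{C}^{N}$, set $\tilde{\mathbf{x}} \triangleq [\mathbf{x}^T,1]^T$, so that $f_i(\mathbf{x}) = \tilde{\mathbf{x}}^H \mathbf{M}_i \tilde{\mathbf{x}}$. I would treat the two implications separately. The \emph{sufficiency} direction (``$\Leftarrow$''), which is the one actually invoked when the lemma is used later, is elementary: assuming the displayed LMI holds with some $\lambda_i\geq 0$, multiplying it on the left and right by $\tilde{\mathbf{x}}^H$ and $\tilde{\mathbf{x}}$ gives $f_0(\mathbf{x}) - \sum_{i=1}^{K}\lambda_i f_i(\mathbf{x}) \geq 0$ for every $\mathbf{x}$; whenever $f_i(\mathbf{x})\geq 0$ for all $i=1,\ldots,K$, the subtracted sum is nonnegative, hence $f_0(\mathbf{x})\geq 0$. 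This needs no regularity assumption and holds for any $K$.

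The \emph{necessity} direction (``$\Rightarrow$'') is the substantive part, essentially the S-lemma. First I would homogenize: replace $\mathbf{x}$ by $\mathbf{z}=[\mathbf{x}^T,t]^T\in\mathbb{C}^{N+1}$ and each affine-quadratic $f_i$ by the Hermitian form $q_i(\mathbf{z})\triangleq \mathbf{z}^H\mathbf{M}_i\mathbf{z}$; points with $t\neq 0$ recover the original $f_i$ after scaling, while the ``points at infinity'' $t=0$ would be handled by a closure/limiting argument together with a Slater-type regularity condition (existence of $\hat{\mathbf{x}}$ with $f_i(\hat{\mathbf{x}})>0$ for all $i\geq 1$, which in the applications of Section II always holds, e.g.\ at $\triangle\mathbf{H}_i=\mathbf{0}$). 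The target then becomes: the hypothesis ``$q_1(\mathbf{z}),\ldots,q_K(\mathbf{z})\geq 0 \Rightarrow q_0(\mathbf{z})\geq 0$ for all $\mathbf{z}$'' should force $\mathbf{M}_0-\sum_{i=1}^{K}\lambda_i\mathbf{M}_i\succeq 0$ for suitable $\lambda_i\geq 0$, which is again an LMI feasibility statement.

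The core step is a convexity-plus-separation argument on the joint numerical range $\mathcal{W}\triangleq\{(q_0(\mathbf{z}),\ldots,q_K(\mathbf{z})):\mathbf{z}\in\mathbb{C}^{N+1}\}\subseteq\mathbb{R}^{K+1}$ (real-valued because every $\mathbf{M}_i$ is Hermitian), which is a cone. In the single-constraint case $K=1$ --- the only case needed for the Frobenius-norm error balls $\mathcal{H}_i,\mathcal{G}_j$ of Section II --- $\mathcal{W}$ is convex, which is the complex, Toeplitz--Hausdorff-type analogue of Dines's theorem on the joint range of two Hermitian quadratic forms. The implication hypothesis then says that $\mathcal{W}$ misses the open convex cone $\{(w_0,w_1):w_1\geq 0,\ w_0<0\}$, so I would separate the two cones by a hyperplane through the origin, obtaining $(\mu_0,\mu_1)\neq\mathbf{0}$ with $\mu_0 w_0+\mu_1 w_1\geq 0$ on $\mathcal{W}$, infer $\mu_0\geq 0$ and $\mu_1\leq 0$ from the sign pattern, use the Slater point to exclude $\mu_0=0$, and finally normalize $\mu_0=1$, $\lambda_1\triangleq-\mu_1\geq 0$, so that ``$\mu_0 w_0+\mu_1 w_1\geq 0$ on $\mathcal{W}$'' reads exactly as $\mathbf{M}_0-\lambda_1\mathbf{M}_1\succeq 0$. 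For general $K$ the same scheme applies once the analogous convexity of $\mathcal{W}$ is available; since every invocation in this paper is with $K=1$, I would carry out the $K=1$ case fully and then state the multi-constraint version under the corresponding structural hypothesis. I expect the main obstacle to be precisely this convexity input (the Dines/Toeplitz--Hausdorff ingredient) together with the bookkeeping of the homogenization --- the behaviour at $t=0$ and the role of the Slater condition, absent which the separating hyperplane degenerates ($\mu_0=0$) and yields no usable multiplier.
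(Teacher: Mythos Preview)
The paper does not actually prove this lemma: it is stated as a known result and attributed to the reference \cite{convex}, with no accompanying argument. So there is no ``paper's own proof'' to compare against; your proposal already goes well beyond what the paper does.

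On substance, your sufficiency argument is correct and is indeed the only direction the paper ever uses. Your plan for necessity --- homogenize, invoke convexity of the joint numerical range of two Hermitian forms (the complex Dines/Toeplitz--Hausdorff fact), then separate --- is the standard route and is fine for $K=1$ under the Slater-type condition you name. You are also right to flag that for general $K\geq 2$ the ``only if'' direction as stated is \emph{not} a theorem without extra structural hypotheses: the joint numerical range of three or more Hermitian forms need not be convex, and the S-procedure is in general lossy. The lemma as printed therefore overstates the biconditional for $K>1$; since every invocation in the paper has a single ellipsoidal constraint ($K=1$), this does not affect the downstream arguments, and your decision to prove the $K=1$ case in full and only state the multi-constraint version under an additional convexity hypothesis is the correct way to handle it.
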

Next, we denote $\mathbf{S}_1=\left(\sum_{i\in\mathcal{K_I}}\mathbf{W}_i+\sum_{k\in\mathcal{K_E}}\mathbf{V}_k\right)^{T}$. With $\mathbf{G}_j=\widehat{\mathbf{G}}_j+\triangle \mathbf{G}_j$, \eqref{EU} can be converted to the following equivalent form:
\begin{align}
	&\label{EU1}\text{vec}(\widehat{\mathbf{G}}_j+\triangle \mathbf{G}_j)^{H}\left(\mathbf{S}_1\otimes\mathbf{U}^{\textup{ES}}_{s_j}\right) \textup{vec}(\widehat{\mathbf{G}}_j+\triangle \mathbf{G}_j)-\eta \nonumber \\
	&=\textup{vec}(\triangle \mathbf{G}_j)^{H} \mathbf{A} \textup{vec}(\triangle \mathbf{G}_j)+2\mathbf{Re}\{\textup{vec}(\triangle \mathbf{G}_j)^{H}\mathbf{b}\}+c \geq 0,
\end{align}
where $\mathbf{A}=\mathbf{S}_1\otimes \mathbf{U}^{\textup{ES}}_{s_j}$,
$\mathbf{b}=\mathbf{A}\textup{vec}(\widehat{\mathbf{G}}_j)$, $c=\textup{vec}(\widehat{\mathbf{G}}_j)^{H}\mathbf{b}-\eta$. Besides, the following equation holds $\|\triangle \mathbf{G}_j\|_F \leq \mu_j \Rightarrow \|\textup{vec}(\triangle \mathbf{G}_j)\|_2 \leq \mu_j $. The uncertainty of CSI in $\textup{(10)}$ can be expressed as 
\begin{align}\label{EU2}
	-\textup{vec}(\triangle \mathbf{G}_j)^H \mathbf{I} \textup{vec}(\triangle \mathbf{G}_j)+\mu^2_j\geq 0.
\end{align}
	
It is noted that by considering $\textup{vec}(\triangle \mathbf{G}_j) $ as the variable $\mathbf{x}$ in \textbf{Lemma 1}, \eqref{EU1} and \eqref{EU2} can be combined into a linear matrix inequality (LMI) as follows:
\begin{align}\label{EU_transformation}
	\begin{pmatrix}
		\mathbf{S}_1 \otimes \mathbf{U}^{\textup{ES}}_{s_j} \!+\!\lambda_{1,j}\mathbf{I}&\left(\mathbf{S}_1 \otimes \mathbf{U}^{\textup{ES}}_{s_j}\right)\textup{vec}(\widehat{\mathbf{G}}_j)\\
		\textup{vec}(\widehat{\mathbf{G}}_j)^H\left(\mathbf{S}_1 \otimes \mathbf{U}^{\textup{ES}}_{s_j}\right) & \alpha_j
	\end{pmatrix} \succeq 0,
\end{align}
where $\alpha_j \! \!=\! \! \textup{vec}(\widehat{\mathbf{G}}_j)^{H}\!\left(\mathbf{S}_1 \otimes \mathbf{U}^{\textup{ES}}_{s_j}\right)\textup{vec}(\widehat{\mathbf{G}}_j)\!-\!\eta\!-\!\lambda_{1,j}\mu_j$, $\{\lambda_{1,j} \geq 0\}^{\mathcal{K_E}}_{j=1}$ are the auxiliary variables.
	
Similarly, by denoting $\mathbf{S}_{2,i}=\frac{\mathbf{W}_i}{\Gamma}-\sum_{\substack{k \in \mathcal{K_I},\\k \neq i}}\!\mathbf{W}_k$, \eqref{IU} is converted into a LMI form:
\begin{align}\label{IU_transformation}
	\begin{pmatrix}
		\mathbf{S}_{2,i} \otimes \mathbf{U}^{\textup{ES}}_{s_i} \!+\!\lambda_{2,i}\mathbf{I}&\left(\mathbf{S}_{2,i} \otimes \mathbf{U}^{\textup{ES}}_{s_i}\right)\textup{vec}(\widehat{\mathbf{H}}_i)\\
		\textup{vec}(\widehat{\mathbf{H}}_i)^H\left(\mathbf{S}_{2,i} \otimes \mathbf{U}^{\textup{ES}}_{s_i}\right) & \varrho_i
	\end{pmatrix} \!\succeq 0,
\end{align}
where $\varrho_i=\textup{vec}(\widehat{\mathbf{H}}_i)^{H}\left(\mathbf{S}_{2,i} \otimes \mathbf{U}^{\textup{ES}}_{s_i}\right)\textup{vec}(\widehat{\mathbf{H}}_i)\!-\!\sigma^2\!-\!\lambda_{2,i}\varepsilon_i$
and $\{\lambda_{2,i} \geq 0\}^{\mathcal{K_I}}_{i=1}$ are the auxiliary variables. Through these transformations above, problem \eqref{P6} is recasted as 
\begin{subequations}\label{P7}
	\begin{align}
		\nonumber&\  \max_{{\{\lambda_{1,j}\},\{\lambda_{2,i}\},\{\mathbf{W}_i\},\{\mathbf{V}_j\},\{\mathbf{U}^{\textup{ES}}_s\},\eta}}\;\eta\\ 
		&\quad \ {\rm s.t.}\ \textup{(21)}, \textup{(22)},  \eqref{P6_C3}-\eqref{P6_C8}.
	\end{align}
\end{subequations}
However, the optimization variables are still coupled in \eqref{EU_transformation} and \eqref{IU_transformation}, and it is challenging to optimize them simultaneously. To solve this issue, the AO method is applied to decompose problem \eqref{P7} into two subproblems, i.e., active beamforming design and passive beamforming design, which can be alternatively optimized.
	
\emph{1) Active Beamforming Design}: Here, we exclusively concentrate on active beamforming design. For given $\{\mathbf{U}^{\textup{ES}}_s\}$, original problem \eqref{P7} is simplified to
\begin{subequations}\label{P8}
	\begin{align}
		\nonumber&\  \max_{\{\lambda_{1,j}\},\{\lambda_{2,i}\}, \{\mathbf{W}_i\},\{\mathbf{V}_j\},\eta} \;\eta\\ 
		&\quad \ {\rm s.t.}\ 
		\textup{(21)}, \textup{(22)}, \eqref{P6_C3}-\eqref{P6_C5}.
	\end{align}
\end{subequations}
As can be observed, the non-convexity of rank-one constraints \eqref{P6_C4} and \eqref{P6_C5} restricts the solution of the new problem \eqref{P8}. To handle it, we employ semi-definite relaxation (SDR) and ignore the constraints \eqref{P6_C4} and \eqref{P6_C5} directly according to the following theorem. 
\begin{theorem}\label{Rank-one}
	\emph{The optimal solutions for the relaxed version of problem \eqref{P8}, i.e., without considering rank-one constraints, always satisfy ${\rm {Rank}}\left( {{{\mathbf{W}}^*_i}} \right) = 1,\forall i \in \mathcal{K_I}$ and $\sum_{j\in\mathcal{K_E}}{\rm {Rank}}\left( {{{\mathbf{V}}^*_j}} \right) \leq 1$ for feasible ${P_{\max}}>0$ and $\epsilon>0$}.
\end{theorem}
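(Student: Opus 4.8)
The relaxation of \eqref{P8} obtained by deleting the rank constraints \eqref{P6_C4}--\eqref{P6_C5} is a convex SDP: the objective $\eta$ is linear, and the S-procedure LMIs \eqref{EU_transformation}, \eqref{IU_transformation}, the power bound \eqref{P6_C3}, and the cones $\mathbf{W}_i\succeq0$, $\mathbf{V}_j\succeq0$ are all convex. Whenever the problem is feasible it is strictly feasible for $P_{\max}>0$ (shrink the beamformers so \eqref{P6_C3} is slack and take the $\lambda$-variables large), so Slater's condition holds, strong duality applies, and the KKT system characterises optimality. My plan is to attach PSD multipliers $\mathbf{\Xi}_{1,j}$, $\mathbf{\Xi}_{2,i}$ to the EU/IU LMIs (written as $\mathbf{T}_{1,j}\succeq0$, $\mathbf{T}_{2,i}\succeq0$), a scalar $\mu\ge0$ to \eqref{P6_C3}, and PSD multipliers $\mathbf{Z}_i$, $\mathbf{Y}_j$ to $\mathbf{W}_i\succeq0$, $\mathbf{V}_j\succeq0$; form the Lagrangian; and read off stationarity in $\mathbf{W}_i$ and in $\mathbf{V}_j$ together with the complementary-slackness relations $\mathbf{Z}_i\mathbf{W}_i^*=\mathbf 0$, $\mathbf{Y}_j\mathbf{V}_j^*=\mathbf 0$, $\mathbf{\Xi}_{1,j}\mathbf{T}_{1,j}^*=\mathbf 0$, $\mathbf{\Xi}_{2,i}\mathbf{T}_{2,i}^*=\mathbf 0$.

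\emph{Rank of the information covariances.} Differentiating the Lagrangian in $\mathbf{W}_i$ and using $\mathrm{Tr}(\mathbf A^H\mathbf B\mathbf C\mathbf D)=\textup{vec}(\mathbf A)^H(\mathbf D^T\!\otimes\mathbf B)\textup{vec}(\mathbf C)$ to collapse the $(MN{+}1)$-dimensional duals back to the $N\times N$ space by partial traces, the stationarity condition rearranges to $\mathbf A_i=\mathbf Z_i+\tfrac1\Gamma\mathbf D_i^{\textup{own}}$, where $\mathbf D_i^{\textup{own}}\succeq0$ is the contribution of IU $i$'s own LMI \eqref{IU_transformation} and $\mathbf A_i\succeq0$ collects $\mu\mathbf I$ (from \eqref{P6_C3}), the contributions of the other IUs' LMIs, and those of the EU LMIs, with signs dictated by the $+\tfrac1\Gamma$/$-1$/$+1$ coefficients with which $\mathbf W_i$ enters $\mathbf S_{2,i}$, $\mathbf S_{2,k}$ and $\mathbf S_1$. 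I would then argue: (i) $\mu>0$, since if \eqref{P6_C3} were inactive, scaling every beam by $1+\delta$ would strictly increase $\eta$ through \eqref{P6_C1}, contradicting optimality; (ii) $\mathbf A_i\succ0$, using (i) together with feasibility for $P_{\max}>0$, $\epsilon>0$; and (iii) $\mathbf D_i^{\textup{own}}$ is rank one, which follows from $\mathbf{\Xi}_{2,i}\mathbf{T}_{2,i}^*=\mathbf 0$ confining $\mathbf{\Xi}_{2,i}$ to the (generically one-dimensional, at an optimal point) null space of $\mathbf{T}_{2,i}^*$, spanned by a vector of the form $[\,\textup{vec}(\widehat{\mathbf H}_i+\triangle\mathbf H_i^\star)^T\ \ 1\,]^T$, so the partial trace collapses to a single dyad. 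Right-multiplying $\mathbf A_i=\mathbf Z_i+\tfrac1\Gamma\mathbf D_i^{\textup{own}}$ by $\mathbf W_i^*$ and using $\mathbf Z_i\mathbf W_i^*=\mathbf 0$ gives $\mathbf W_i^*=\tfrac1\Gamma\mathbf A_i^{-1}\mathbf D_i^{\textup{own}}\mathbf W_i^*$, hence $\mathrm{Rank}(\mathbf W_i^*)\le\mathrm{Rank}(\mathbf D_i^{\textup{own}})=1$; and $\mathbf W_i^*\ne\mathbf 0$ because $\mathbf W_i^*=\mathbf 0$ makes the left-hand side of \eqref{P6_C2} non-positive, impossible since $\sigma^2>0$ and $\Gamma=2^\epsilon-1>0$. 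Therefore $\mathrm{Rank}(\mathbf W_i^*)=1$.

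\emph{Rank of the energy covariances.} The matrices $\{\mathbf V_j\}$ enter \eqref{P6} only through the aggregate $\mathbf V:=\sum_{j\in\mathcal{K_E}}\mathbf V_j$ (inside $\mathbf S_1$ in the EU LMIs and as $\mathrm{Tr}(\mathbf V)$ in \eqref{P6_C3}) and appear in no IU constraint, so stationarity in $\mathbf V_j$ is identical for all $j$ and reads $\mathbf Y_j=\mathbf Y:=\mu\mathbf I-\sum_{j'}\mathbf E_{j'}\succeq0$, where $\mathbf E_{j'}\succeq0$ is the (partial-traced) contribution of the $j'$-th EU LMI; thus $\mathrm{Range}(\mathbf V_j^*)\subseteq\mathrm{Null}(\mathbf Y)$ for every $j$. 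Establishing $\dim\mathrm{Null}(\mathbf Y)\le1$ --- i.e. that $\sum_{j'}\mathbf E_{j'}$ reaches its top eigenvalue $\mu$ in at most one direction, again via complementary slackness on the EU LMIs and feasibility for $P_{\max}>0$, $\epsilon>0$ --- yields $\mathrm{Rank}(\mathbf V)\le1$. Since the problem sees $\{\mathbf V_j\}$ only through $\mathbf V$, reassigning $\mathbf V_1^\star=\mathbf V$ and $\mathbf V_j^\star=\mathbf 0$ for $j\ge2$ preserves optimality, giving an optimal point with $\sum_{j\in\mathcal{K_E}}\mathrm{Rank}(\mathbf V_j^\star)\le1$.

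\emph{The main difficulty.} The crux is steps (ii)--(iii) and their $\mathbf V$-counterpart: proving that the combined dual matrix ($\mathbf A_i$, resp. $\mathbf Y$) is positive definite, resp. has a one-dimensional kernel, and that the robust S-procedure reformulation --- the uncertainty radii $\varepsilon_i$, $\mu_j$ and the $\lambda\mathbf I$ regularisers --- does not inflate the relevant eigenspace. I expect to handle this by writing the Schur complements of \eqref{EU_transformation}/\eqref{IU_transformation}, using $\mathbf{\Xi}_{1,j}\mathbf{T}_{1,j}^*=\mathbf 0$, $\mathbf{\Xi}_{2,i}\mathbf{T}_{2,i}^*=\mathbf 0$ to pin the dual LMI matrices to those (generically rank-one) null spaces, and by a contradiction argument showing that an extra kernel direction would permit a feasible perturbation that strictly increases $\eta$ or violates \eqref{P6_C3}; everything else is routine bookkeeping with the Kronecker identity.
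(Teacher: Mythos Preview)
Your approach is essentially the same as the paper's: both invoke Slater's condition for strong duality, form the Lagrangian with PSD multipliers on the S-procedure LMIs and the semidefinite cones, and extract the rank conclusions from the KKT stationarity and complementary-slackness relations. The paper's Appendix~A is in fact considerably terser than your proposal---it writes the stationarity identities for $\mathbf{Y}_i^*$ and $\mathbf{Z}_j^*$ (your $\mathbf{Z}_i$, $\mathbf{Y}_j$), asserts $\lambda^*>0$ and $\mathbf{W}_i^*\neq\mathbf{0}$ under $P_{\max}>0$, $\Gamma_i>0$, and then defers the entire rank argument to an external reference. Your decomposition $\mathbf{A}_i=\mathbf{Z}_i+\tfrac{1}{\Gamma}\mathbf{D}_i^{\textup{own}}$ and the aggregate-$\mathbf{V}$ reduction for the energy beams make the mechanism more explicit, and the steps you flag as the ``main difficulty'' (positive definiteness of $\mathbf{A}_i$ and the rank-one collapse of the partial-traced LMI dual in the robust setting) are exactly the points the paper does not prove but outsources to its citation. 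So your proposal matches the paper's route and correctly isolates where the nontrivial work sits.
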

\begin{proof}
	Please refer to the Appendix.
\end{proof}
	
As a result, the relaxed problem \eqref{P8} is a standard SDP, and can be efficiently solved via existing convex solvers such as CVX \cite{cvx}.
	
\emph{2) Passive Beamforming Design}:
In this subproblem, we optimize the passive beamforming with fixed $\{\mathbf{W}_i\}$ and $\{\mathbf{V}_j\}$. This subproblem is reduced to
\begin{subequations}\label{P9}
	\begin{align}
		\nonumber&\  \max_{\{\lambda_{1,i}\},\{\lambda_{2,i}\}, \{\mathbf{U}^{\textup{ES}}_s\},\eta}\; \eta\\ 
		&\quad \quad \quad \ {\rm s.t.} \ \ \textup{(21)}, \textup{(22)}, \eqref{P6_C6}-\eqref{P6_C8}.
	\end{align}
\end{subequations}
Note that the difficulty in the design of STAR-RIS coefficients stems from the rank-one constraint \eqref{P6_C6}. Different from the relaxation of problem \eqref{P8}, we first rewrite this constraint as an equivalent form \cite{Yu_RANK}:
\begin{align}
	\mathrm{Rank}(\mathbf{U}^{\textup{ES}}_s)=1 \Leftrightarrow \|\mathbf{U}^{\textup{ES}}_s\|_{*}-\|\mathbf{U}^{\textup{ES}}_s\|_{2}=0, \forall s\in\{t,r\},
\end{align}
where $\|\mathbf{U}^{\textup{ES}}_s\|_{*}=\sum_i \sigma_i(\mathbf{U}^{\textup{ES}}_s)$ and  $\|\mathbf{U}^{\textup{ES}}_s\|_{2}=\sigma_1(\mathbf{U}^{\textup{ES}}_s)$ denote the nuclear norm and spectral norm of $\mathbf{U}^{\textup{ES}}_s$, respectively, and $\sigma_i$ is the $i$-th largest singular value of $\mathbf{U}^{\textup{ES}}_s$. Then, the penalty method \cite{penalty} is leveraged, where we convert the constraint \eqref{P6_C6} into a non-negative penalty function term appended to the objective function as follows:
\begin{subequations}\label{P10}
	\begin{align}
		\nonumber&\  \max_{\{\lambda_{1,j}\},\{\lambda_{2,i}\}, \{\mathbf{U}^{\textup{ES}}_s\},\eta} \;\eta-\xi \sum_{s\in\{t,r\}} \left(\|\mathbf{U}^{\textup{ES}}_s\|_{*}-\|\mathbf{U}^{\textup{ES}}_s\|_{2}\right)\\\ 
		&\quad \quad \quad \ {\rm s.t.}\ \ \textup{(21)}, \textup{(22)}, \eqref{P6_C7}, \eqref{P6_C8},
	\end{align}
\end{subequations}
where $\xi>0$ is the penalty factor. Driven by the optimization goal, the value of the penalty term will decrease with the increase of $\xi$, and when $\xi \rightarrow +\infty$, the penalty term gradually converges to 0. At this point, the optimal solution $\mathbf{U}^{\textup{ES}}_s$ of problem \eqref{P10} always satisfies the equality constraint \eqref{P6_C6}. However, it should be noted that the initial value of $\xi$ has a significant impact on the effect of the algorithm. At the start of the iteration, if $\xi$ is too large, the focus of optimization will shift from the original function $\eta$ to the penalty item, which violates our intention. Hence, we need to initialize $\xi$ with a small value, and then gradually increase it until the convergence criterion of the rank-one constraint is met:
\begin{align}
	\max\{\|\mathbf{U}^{\textup{ES}}_s\|_{*}-\|\mathbf{U}^{\textup{ES}}_s\|_{2}, s\in\{t,r\}\} \leq \epsilon_1,
\end{align} 
where $\epsilon_1$ is a predefined maximum violation of the equality constraint.
	
Nevertheless, the non-convexity of the penalty term makes the reformulated problem \eqref{P10} still difficult to solve. Inspired by the successive convex approximation (SCA) technique, we approximate the penalty term by its first-order Taylor expansion to obtain the convex upper bound as $\textup{(29)}$, which is shown at the top of the next page, 
\begin{figure*}[!t]
	\normalsize
	\begin{equation}
		\begin{aligned}
			\|\mathbf{U}^{\textup{ES}}_s\|_{*}\!\!-\!\|\mathbf{U}^{\textup{ES}}_s\|_{2}
			&\leq \|\mathbf{U}^{\textup{ES}}_s\|_{*}\!\!-\!\left\{\!\|\mathbf{U}^{\textup{ES}(l)}_s\|_2\!+\!\mathrm{Tr}\!\left[\varphi \left(\mathbf{U}^{\textup{ES}(l)}_s\right)\varphi \left(\mathbf{U}^{\textup{ES}(l)}_s\right)^{H}\!\!\left(\mathbf{U}^{\textup{ES}}_s\!\!-\! \mathbf{U}_s^{\textup{ES}(l)}\right)\right]\!\right\} \\
			&\triangleq \|\mathbf{U}^{\textup{ES}}_s\|_{*}-\overline{\mathbf{U}}^{\textup{ES}(l)}_s,
		\end{aligned}
	\end{equation}
	\hrulefill \vspace*{0pt}
\end{figure*}
where $\varphi\left(\mathbf{U}^{\textup{ES}(l)}_s\right)$ denotes the eigenvector related to the largest eigenvalue of $\mathbf{U}^{\textup{ES}(l)}_s$.
Next, by replacing the penalty term with its convex upper bound according to the given point $\{{\mathbf{U}}^{\textup{ES}(l)}_s\}$, problem \eqref{P10} is approximated into the following optimization problem:
\begin{subequations}\label{P11}
	\begin{align}
		\nonumber&\  \max_{\{\lambda_{1,j}\},\{\lambda_{2,i}\}, \{\mathbf{U}^{\textup{ES}}_s\},\eta}\; \eta-\xi\sum_{s\in\{t,r\}}\left(\|\mathbf{U}^{\textup{ES}}_s\|_{*}-\overline{\mathbf{U}}^{\textup{ES}(l)}\right)\\ 
		&\quad \quad \quad\ {\rm s.t.}\ \ \textup{(21)}, \textup{(22)}, \eqref{P6_C7}, \eqref{P6_C8}.
	\end{align}
\end{subequations}
Now, problem \eqref{P11} is a SDP and can be solved by the CVX. Based on this, solving the problem \eqref{P11} repeatedly by exploiting SCA and updating the penalty factor with $\xi=\tau\xi$
until $\textup{(28)}$ is satisfied, we can obtain the optimal passive beamforming in each AO iteration.
And the penalty-based algorithm is summarized in \textbf{Algorithm 1}.
\begin{algorithm}[!t]
	\caption{penalty-based algorithm for solving problem \eqref{P11}}
	\begin{algorithmic}[1]
		\STATE {Initialize the feasible {$\mathbf{U}_s^{\textup{ES}(0)}$}, penalty factor $\xi$ and given $\{\mathbf{W}_i, \mathbf{V}_j\}$}, set the allowable $\epsilon_1$, $\epsilon_2$, and the maximum number of iterations $L_{\max}.$
		\STATE $\textbf{repeat}$
		\STATE \quad Set iteration index $l = 0$;
		\STATE \quad $\textbf{repeat}$
		\STATE \quad \ \ For given $\{\mathbf{U}^{\textup{ES}(l)}_s\}$, solve the problem \eqref{P11};
		\STATE \quad \ \ Update $\{\mathbf{U}^{\textup{ES}(l+1)}_s\}$ with the obtained optimal solutions, $l=l+1$;
		\STATE \quad \textbf{until} the iterative gain of the objective function value is below a predefined threshold $\epsilon_2 > 0$ or $l = L_{\max}$.
		\STATE \quad Update $\{\mathbf{U}^{\textup{ES}(0)}_s\}$ with the optimized solutions $\{\mathbf{U}^{\textup{ES}(l)}_s\}$.
		\STATE \quad Update $\xi=\tau\xi$.
		\STATE \textbf{until} the constraint violation is below a predefined threshold $\epsilon_1> 0$.
	\end{algorithmic}
\end{algorithm}
	
To this end, the original problem \eqref{P7} with highly coupled variables was decomposed into two subproblems, i.e., problems \eqref{P8} and \eqref{P11}, which are solved in an iterative manner, according to the AO method. Wherein, the non-convexity for active beamforming design is relaxed according to \textbf{Theorem 1}, while the rank-one constraint in STAR-RIS beamforming is solved by employing penalty-based \textbf{Algorithm 1}. On the one hand, the objective function of problem \eqref{P11} will gradually converge with the increase of the penalty factor in each passive beamforming design \cite{Yu_RANK}. On the other hand, the AO iteration algorithm is guaranteed to converge, and the relevant proofs can be found in the literature \cite{Wu_QoS}. Therefore, our proposed algorithm for problem \eqref{P7} will eventually converge to a stationary point. The specific details of the developed algorithm are presented in \textbf{Algorithm 2}.
	
In addition, there is a straightforward observation that the obtained solution to problem \eqref{P7} is generally sensitive to the value of $\epsilon$. In order to ensure the feasibility of problem \eqref{P7}, the value of $\epsilon$ should be in a reasonable range $[0,R_{\max}]$. $R_{\max}$ is the maximum achievable data rate for this system, and might be attained by resolving the subsequent optimization problem:
\begin{subequations}\label{P12}
	\begin{align}
		\nonumber&\max_{{\{\mathbf{w}_i\},\{\mathbf{v}_j\},\{\mathbf{u}_s^{\textup{ES}}\},\gamma}} \gamma\\ 
		&\label{P12_C1}\quad \quad \ {\rm s.t.} \quad \frac{|\left(\mathbf{u}^{\textup{ES}}_{s_i}\right)^H\mathbf{H}_i\mathbf{w}_i|^2}{\sum_{k\in\mathcal{K_I},k\ne i}|\left(\mathbf{u}^{\textup{ES}}_{s_i}\right)^H\mathbf{H}_i\mathbf{w}_k|^2+\sigma^2}\geq \gamma, \nonumber\\
		&\quad\quad \quad \quad \quad\forall i \in \mathcal{K_I}, s_i\in\{t,r\},\\ 
		&\label{P12_C2}\quad\quad \quad \quad \quad \eqref{P1_C1}-\eqref{P1_C3}.
	\end{align}
\end{subequations}
where $\gamma$ is an introduced auxiliary variable with $\gamma \geq 0$. Next, in order to simplify the fractional constraint \eqref{P12_C1}, we construct the following function:
\begin{align}
	f(\gamma)\!=\!{|\left(\mathbf{u}^{\textup{ES}}_{s_i}\right)^H \! \! \mathbf{H}_i\mathbf{w}_i|^2}\!\!-\!\!\gamma\left(\!\sum_{k\in\mathcal{K_I},k\ne i}\!\!|\left(\mathbf{u}^{\textup{ES}}_{s_i}\right)^H \! \! \mathbf{H}_i\mathbf{w}_k|^2\!\!+\!\sigma^2\!\right).
\end{align}
As can be seen, this function is monotonically decreasing with respect to $\gamma$ when holding other variables constant. Hence, we can turn the goal of maximizing $\gamma$ into finding the maximum value of the function $f(\gamma)$ and use the bisection method to find the optimal $\gamma$. When the optimized maximum value of $f(\gamma)>0$, we can further increase the value of $\gamma$ while ensuring that the original problem is feasible. On the contrary, if the optimized maximum value of $f(\gamma)<0$, we can only make the original problem feasible by reducing the value of $\gamma$. Thus, only the $\gamma$ that makes the optimized maximum value of $f(\gamma)=0$ is the optimal solution for the original problem. Considering that the non-smoothness of $f(\gamma)$ is difficult to handle, we introduce a new auxiliary variable $t_0$ and yield a new optimization problem:
\begin{algorithm}[!t]
	\caption{AO algorithm for solving problem \eqref{P7}}
	\begin{algorithmic}[1]
		\STATE{Determine the value of $\epsilon$ and initialize feasible point $\{\mathbf{U}^{\textup{ES}(0)}_s\}$} with random matrix and the iteration index $k=0$, set the allowable $\varepsilon_0$, the maximum number of iterations $K_{\max}$.
		\STATE \textbf{repeat}
		\STATE \quad{For given $\{\mathbf{U}^{\textup{ES}(k)}_s\}$, solve the problem \eqref{P8} to obtain the optimized $\{\mathbf{W}^{(k)}_i, \mathbf{V}^{(k)}_j\}$};
		\STATE \quad{Update $\{\mathbf{U}^{\textup{ES}(k+1)}_s\}$ via solving the problem \eqref{P11} by applying \textbf{Algorithm 1}};
		\STATE \quad Update $k=k+1$;
		\STATE  {\bf until} the iterative gain of the objective function value of problem \eqref{P7} is below a predefined threshold $\varepsilon_0 >0$ or $k=K_{\max}$.
	\end{algorithmic}
\end{algorithm}
\begin{subequations}\label{P13}
	\begin{align}
		\nonumber&\max_{{\{\mathbf{w}_i\},\{\mathbf{v}_j\},\{\mathbf{u}^{\textup{ES}}_s\},t_0}} t_0\\ 
		&\label{P13_C1}\ \ {\rm s.t.}  \	{|\!\left(\mathbf{u}^{\textup{ES}}_{s_i}\right)^H \! \!\mathbf{H}_i\mathbf{w}_i|^2}\!\!-\!\!\gamma\left(\!\sum_{k\in\mathcal{K_I},k\ne i} \! \! \!\! \!|\!\left(\mathbf{u}^{\textup{ES}}_{s_i}\right)^H \! \!\mathbf{H}_i\mathbf{w}_k|^2\!+\! \sigma^2\!\right)\! \! \geq \! t_0,\nonumber \\
		&\quad \quad \ \forall i\in \mathcal{K_I}, s_i \in\{t,r\},\\
		&\label{P13_C2}\quad \quad  \ \eqref{P1_C1}-\eqref{P1_C3}.
	\end{align}
\end{subequations}
	
In fact, given $\gamma$, problem \eqref{P13} can also be solved directly by applying \textbf{Algorithm 2}. Along this line, we can update the search range $[\gamma_{\min},\gamma_{\max}]$ in the $n$-th iteration and determine  $\gamma_{n+1}=\frac{\gamma_{\max}+\gamma_{\min}}{2}$ for the next search. According to the bisection criterion, the value of $\gamma$ will eventually converge when the following inequality is satisfied:
\begin{align}
	\|\gamma_n-\gamma_{n+1}\| \leq \varepsilon_2,
\end{align}
where $\varepsilon_2$ is a predefined threshold. After solving the above, we determine the maximum feasible value of $\epsilon$, i.e., $R_{\max}=\log_{2}\left(1+\gamma^*\right)$, for problem \eqref{P7}. Until here, the overall algorithm for the original MOOP \eqref{P3} can be completely outlined in \textbf{Algorithm 3}.
\begin{algorithm}[!t]
	\caption{$\epsilon$-constraint method solving MOOP \eqref{P3}}
	\begin{algorithmic}[1]
		\STATE {Solve problem \eqref{P12} by applying bisection search method and \textbf{Algorithm 2} to get the $R_{\max}$}, initialize the factor $\delta=0$ and step $\triangle\delta$.
		\STATE \textbf{repeat}
		\STATE \quad {Update the $\epsilon$ with $\epsilon=\delta R_{\max}$; }
		\STATE \quad {Given $\epsilon$, reformulate MOOP \eqref{P3} into a tractable SOOP \eqref{P7} via exploiting $\epsilon$-constraint method and S-procedure;}
		\STATE \quad {Solve problem \eqref{P7} by applying $\textbf{Algorithm 2}$ to obtain the optimized $\eta$ corresponding to given $\epsilon$};
		\STATE \quad {Update $\delta=\delta+\triangle\delta$ ;}
		\STATE {\bf until} $\delta>1$.
	\end{algorithmic}
\end{algorithm}
	
\subsection{Proposed Solution for TS}
In this subsection, \textbf{Algorithm 2} is extended to solve MOOP \eqref{P4} for TS. Similar to ES, we aim to maximize the minimum harvested power by EUs, subject to the minimum achievable data rate constraint in the reformulated SOOP. The difference is that new variables $\lambda_t$ and $\lambda_r$ for the time allocation need to be optimized. Let $\mathbf{W}^s_i \triangleq\mathbf{w}^s_i({\mathbf{w}^s_i})^H$, $\mathbf{V}^s_j\triangleq\mathbf{v}^s_j({\mathbf{v}^s_j})^H$ and $\mathbf{U}_s^{\textup{TS}}\triangleq\mathbf{u}_s^{\textup{TS}}\left({\mathbf{u}_s^{\textup{TS}}}\right)^H$. By applying the $\epsilon$-constraint method and \textbf{Lemma 1}, MOOP \eqref{P4} is transformed into a SOOP as 
\begin{subequations}\label{P14}
	\begin{align}
		\nonumber&\  \max_{{\{\mathbf{W}^s_i\},\{\mathbf{V}^s_j\},\{\mathbf{U}^{\textup{TS}}_s\},\{\lambda_s\},\eta}} \eta \\ 
		&\label{P14_C1}{\rm s.t.}\  \lambda_s\mathrm{Tr}\left(\left(\sum_{i\in\mathcal{K_I}}\mathbf{W}^s_i+\sum_{k\in\mathcal{K_E}}\mathbf{V}^s_k\right)\mathbf{G}^H_j\mathbf{U}^{\textup{TS}}_{s}\mathbf{G}_j\right)\geq \eta,\nonumber \\
		&\quad \ \ \forall j \in \mathcal{K_E}, \forall {s}\in\{t,r\},\\
		&\label{P14_C2}\quad \ \ \mathrm{Tr}\left(\left(\frac{\mathbf{W}^s_i}{2^{\frac{\epsilon}{\lambda_s}}-1} \!-\!\!\!\sum_{k\in\mathcal{K_I},k\ne i}\!\mathbf{W}^s_k\right)\mathbf{H}^H_i\mathbf{U}^{\textup{TS}}_{s}\mathbf{H}_i\right)\geq \sigma^2,\nonumber \\
		& \quad \ \  \forall i \in \mathcal{K_I}, \forall {s}\in\{t,r\},\\
		&\label{P14_C3}\quad \ \ \lambda_r(\sum_{i\in\mathcal{K_I}} \!\mathbf{W}^r_i \!+\!\!\sum_{j\in\mathcal{K_E}}\!\!\mathbf{V}^r_j)+\lambda_t(\sum_{i\in\mathcal{K_I}} \!\!\mathbf{W}^t_i \!+\!\!\sum_{j\in\mathcal{K_E}}\!\!\mathbf{V}^t_j) \leq P_A,\\ 
		&\label{P14_C4}\quad \ \ \mathbf{W}^s_i \succeq 0, \mathrm{Rank}(\mathbf{W}^s_i)=1,  \forall i \in \mathcal{K_I}, \forall s\in \{t,r\},\\
		&\label{P14_C5}\quad \ \ \mathbf{V}^s_j \succeq 0, \mathrm{Rank}(\mathbf{V}^s_j)=1,  \forall j \in \mathcal{K_E}, \forall s \in\{t,r\},\\
		&\label{P14_C6}\quad\ \ \mathbf{U}^{\textup{TS}}_s\succeq 0, \ \mathrm{Rank}\left(\mathbf{U}^{\textup{TS}}_s\right)=1, \forall s\in\{t,r\}, \\
		&\label{P14_C7}\quad\ \ \mathrm{Diag} \left(\mathbf{U}^{\textup{TS}}_s\right)=\mathbf{I}^{M\times M}, \forall s\in\{t,r\},\\
		&\label{P14_C8}\quad \ \  \lambda_r,\lambda_t \in [0,1],  \lambda_r+\lambda_t=1.
	\end{align}
\end{subequations}
It is intuitive to see that problem \eqref{P14} for given $\{\lambda_t,\lambda_r\}$ can be considered as a simpler form of problem \eqref{P6} without amplitude adjustment constraint, which can be solved by resorting to \textbf{Algorithm 2}. 
Therefore, the main difficulty in solving problem \eqref{P14} falls in the determination of the optimal time allocation policy. To overcome this issue, a two-layer algorithm is proposed. In the outer layer, the one-dimensional search is used to obtain the optimal time allocation $\{\lambda^*_t,\lambda^*_r\}$. While the remaining variables can be optimized by the inner-layer iteration with fixed $\{\lambda^*_t,\lambda^*_r\}$.
	
Likewise, in order to ensure the feasibility of the problem \eqref{P14}, we explore the maximum value of $\epsilon$ allowed for TS by invoking the similar method for ES. Specially, we first introduce a new variable $\gamma_0$ that satisfies $\lambda_s\log_2(1+\textup{SINR}^s_i) \geq \gamma_0, s\in\{t,r\}$ and build a new function as $\textup{(36)}$, which is shown at the top of the next page.
\begin{figure*}[!t]
	\normalsize
	\begin{align}
		f(\gamma_0)\!=\!{|\left(\mathbf{u}^{\textup{TS}}_s\right)^H \! \!\mathbf{H}_i\mathbf{w}^s_i|^2}\!-\!\left(2^{\frac{\gamma_0}{\lambda_s}}\!-\!1\right)\left(\sum_{k\in\mathcal{K_I},k\ne i} \! \! \!\! |\left(\mathbf{u}^{\textup{TS}}_{s}\right)^H \! \!\mathbf{H}_i\mathbf{w}^s_k|^2\!+\sigma^2\right), \forall i \in \mathcal{K_I}, s\in\{t,r\}.
	\end{align}
	\hrulefill \vspace*{0pt}
\end{figure*}
Then, a new optimization problem with introducing auxiliary variable $T_0$ is formulated as  
\begin{subequations}\label{P15}
	\begin{align}
		\nonumber&\max_{{\{\mathbf{w}^s_i\},\{\mathbf{v}^s_j\},\{\mathbf{u}^{\textup{TS}}_s\},T_0}} T_0\\ 
		&\ \ {\rm s.t.}  \quad f(\gamma_0) \geq \! T_0, \forall i\in \mathcal{K_I}, n \in\{t,r\}.\\
		&\quad \quad \quad \eqref{P4_C1}-\eqref{P4_C3}.
	\end{align}
\end{subequations}
It is obvious that the problem \eqref{P15} can be solved similarly to how the problem \eqref{P14} was previously presented. Accordingly, we can update the value of $\gamma_0$ with bisection criteria until $\textup{(34)}$ is satisfied. As a result, by traversing the value of $\epsilon$ in this determined feasible region and solving the related SOOP, we can get a proper rate-energy trade-off region for TS. Since it is similar to \textbf{Algorithm 3}, the specific details of the algorithm proposed for TS are omitted here.

\subsection{Computational Complexity and Convergence Analysis}
The computational complexity of the proposed algorithms is analyzed in the following. Since \textbf{Algorithm 3} for obtaining the entire rate-energy region can be seen as multiple iterations of \textbf{Algorithm 2} with different $\epsilon$, its computational complexity is $\left(I_{\epsilon}+I_b\right)$ times that of \textbf{Algorithm 2}, where $I_{\epsilon}$ and $I_{b}$ denote the ergodic number of the $\epsilon$-constraint method and the number of bisection search for obtaining $R_{\max}$, respectively. Thus, the computational complexity of \textbf{Algorithm 2} becomes the focus of attention. Note that the transformed subproblems are both standard SDP problems.
According to \cite{SDR}, the approximate computational complexity of \eqref{P8} and \eqref{P11} is given by $\mathcal{O}_A=\mathcal{O}\left((K_E+K_I)(M^{3.5}N^{3.5}+N^{3.5})\right)$ and $\mathcal{O}_P=\mathcal{O}\left((K_E+K_I)(M^{3.5}N^{3.5})+2M^{3.5})\right)$, respectively. Let $I^{\textup{ES}}_{A}$ and $I^{\textup{ES}}_{P}$ denote the iteration numbers of the AO and penalty method, respectively. The overall complexity for  \textbf{Algorithm 2} is $\mathcal{O}\left(I^{\textup{ES}}_{A}(\mathcal{O}_A+I^{\textup{ES}}_{P}\mathcal{O}_P)\right)$. 
	
Based on this, the complexity of the proposed algorithm for ES is measured as $\mathcal{O}^{\textup{ES}}=\left(I_\epsilon+I_b\right)\mathcal{O}\left(I^{\textup{ES}}_A\left(\mathcal{O}_A+I^{\textup{ES}}_P\mathcal{O}_P\right)\right)$. While for TS, one-dimensional search is used to find the optimal solution for time allocation. Let $L$ denote the search times, and then the complexity of the proposed algorithm for TS is measured as
$\mathcal{O}^{\textup{TS}}=L\left(I_\epsilon+I_b\right)\mathcal{O}\left(I^{\textup{TS}}_A\left(\mathcal{O}_A+I^{\textup{TS}}_P\mathcal{O}_P\right)\right)$, where $I^{\textup{TS}}_A$ and $I^{\textup{TS}}_P$ denote the iteration numbers of the AO and penalty method for TS, respectively. It can be seen that the complexity of the proposed TS algorithm is higher than that of ES due to the addition of one-dimensional search.

Note that the convergence of $\epsilon$-constraint method, AO, SCA, and penalty method has been proved in \cite{epsilon_method,cvx,penalty}. Based on this, with predetermined convergence criterion, the proposed algorithm for ES will eventually converge a stationary point via many iterations. Additionally, one-dimensional search is also a method that guarantees convergence. Thus, the proposed algorithm for TS is also capable of reaching convergence.
	
\section{Numerical Results}
Based on various operating protocols, numerical results obtained from different perspectives are presented to demonstrate the efficacy of the STAR-RIS on SWIPT systems.
\subsection{Simulation Setup}
\begin{figure}
	\setlength{\abovecaptionskip}{0cm}   
	\setlength{\belowcaptionskip}{0.2cm}   
	\setlength{\textfloatsep}{7pt}
	\centering
	\includegraphics[width=3.6in]{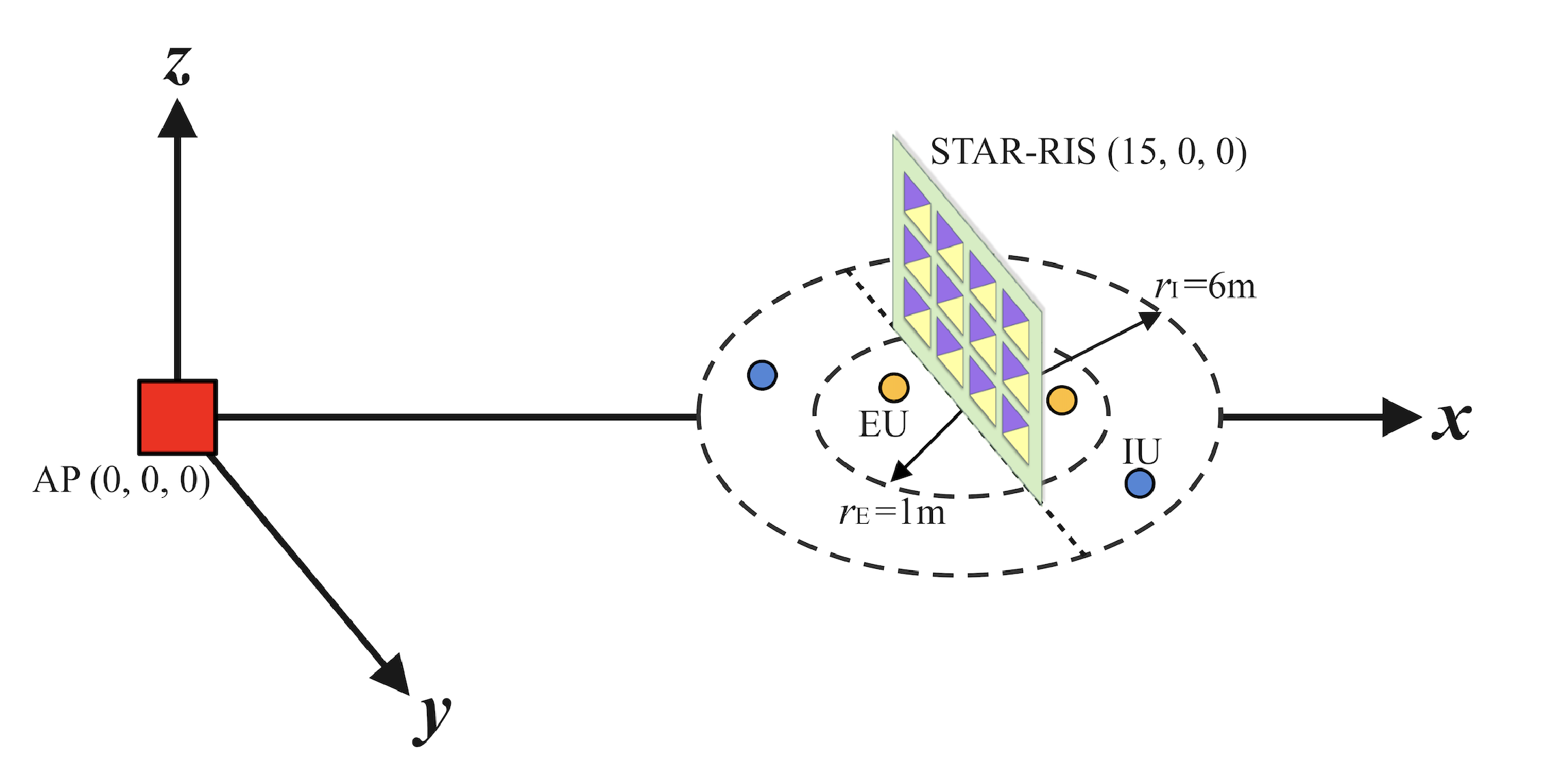}
	\caption{Simulation setup}
	\label{Setup}
\end{figure}
First we introduce the three-dimensional coordinate setup considered in our simulation. As shown in Fig. \ref{Setup}, the AP is located at $(0,0,0)$ meters, and the STAR-RIS with elements configured in a uniform planar array (UPA) is deployed at the user hotspot with coordinates $(15,0,0)$ meters. The EUs and IUs are randomly distributed in circle and ring-shaped areas centered at the STAR-RIS with a radius of $r_{E}=1$ and $r_{I}\in(1,6)$ m, respectively. For ease of illustration, we analyze a basic scenario where each side of the STAR-RIS is distributed with only one IU and one EU. In this paper, all channels are modeled as Rician fading channels as follows:
\begin{subequations}
	\begin{gather}
		\mathbf{G}=\sqrt{\rho_{AS}(d)}\left(\sqrt{\frac{K_{AS}}{K_{AR}+1}}\mathbf{G}^\textup{LoS}+\sqrt{\frac{K_{AS}}{K_{AS}+1}}\mathbf{G}^\textup{NLoS}\right),\\
		\mathbf{v}_k=\sqrt{\rho_{SU}(d)}\left(\sqrt{\frac{K_{SU}}{K_{SU}+1}}\mathbf{v}_k^\textup{LoS}+\sqrt{\frac{K_{SU}}{K_{SU}+1}}\mathbf{v}_k^\textup{NLoS}\right), \nonumber\\
		\mathbf{v} \in \{\mathbf{h},\mathbf{g}\}.
	\end{gather}
\end{subequations}
where $\rho_{AS}(d)=\frac{\rho_0}{d^{\alpha_{AS}}_{AS}}$, $\rho_{SU}(d)=\frac{\rho_0}{d^{\alpha_{SU}}_{SU}}$, and $\rho_0$ represents the path loss at reference 1 m, $d_{AS}$ and $d_{SU,k}$ denote the distance from the AP to the STAR-RIS and from  the STAR-RIS to the $k$-th user, respectively, $\alpha_{AS}$ and $\alpha_{SU}$ denote the corresponding path loss exponents. In addition, $K_{AS}$ and $K_{SU}$ are the Rician factors, and $\mathbf{G}^{\textup{LoS}}$ and $\mathbf{v}^{\textup{LoS}}_k$ are the corresponding deterministic LoS components, while $\mathbf{G}^{\textup{NLoS}}$ and $\mathbf{v}^{\textup{NLoS}}_k$ are the corresponding deterministic NLoS components, which are modeled as random Rayleigh fading components.  In addition, the normalized maximum channel estimation errors of IU $i$ and EU $j$ are set to be $\rho_{H}=\frac{\varepsilon_i}{\|\widehat{\mathbf{H}}_i\|}$ and 
$\rho_{G}=\frac{\mu_j}{\|\widehat{\mathbf{G}}_j\|}$, respectively. The specific system parameters are presented in Table I \cite{Mu_star,Multi_objective_IRS}.
\begin{table*}[t]\small
	\centering
	\caption{\textcolor{black}{System Parameters}}
	\begin{tabular}{|l|l|}
		\hline
		\centering
		{Carrier frequency}  & {$750$MHz}\\
		\hline
		\centering
		{Bandwidth}  & {$1$MHz}\\
		\hline
		\centering
		Path loss at the reference distance of 1 meter  & $\rho_0=-30$dB \\
		\hline
		\centering
		Rician factor of the RIS assisted channels & $K_{AS}=K_{SU}= 3$dB \\ 
		\hline
		\centering
		Path-loss exponents of the RIS assisted channels   & $\alpha_{AS}=\alpha_{SU} = 2.2$    \\ 
		\hline
		\centering
		Maximum power budget  & $P_{\max} = 10$ W    \\ 
		\hline
		\centering
		Noise power at receivers  & $\sigma^2 = -90$ dBm    \\ 
		\hline
		\centering
		Initialized penalty factor for Algorithms 2  & $\xi={10^{ - 4}}$   \\ 
		\hline
		\centering
		Maximum number of iterations for Algorithm 1 and 2 & $L_{\max}=30$, $K_{\max}=20$ \\ 
		\hline  
		\centering
		Convergence accuracy   & ${{\varepsilon_0}}={{\epsilon_1}}={{\varepsilon_2}}={10^{ -3}}, {\epsilon_2}={10^{ - 7}}$  \\ 
		\hline
		\centering
		Search step size $s_t$  for $\epsilon$-constraint method & $\triangle\delta=0.1$  \\ 
		\hline
	\end{tabular}
	\centering
\end{table*}
	
To demonstrate the performance improvements introduced by deploying the STAR-RIS in a SWIPT system, two baselines are considered for comparison. 1) \textbf{Baseline scheme 1 (also referred to as reflecting-only RIS)}: In this case, a reflecting-only RIS is deployed at the $(15,6,0)$ meters to facilitate communication for all EUs and IUs via the passive reflective beamforming. 2) \textbf{Baseline scheme 2 (also referred to as conventional RIS)}: In this case, a reflecting-only RIS and a transmitting-only RIS are deployed adjacent to each other for communication. For fairness in comparison, each conventional RIS consists of $M/2$ elements, and the coefficient matrices are regarded as $\boldsymbol{\beta}^t=[\mathbf{1}_{1\times M/2},\mathbf{0}_{1\times M/2}]^T $ for transmitting-only RIS and $\boldsymbol{\beta}^r=[\mathbf{0}_{1\times M/2},\mathbf{1}_{1\times M/2}]^T$ for reflecting-only RIS. It is worth noting that the resulting optimization problems for baseline schemes can also be solved directly by applying \textbf{Algorithm 2}. It is important to emphasize that the following results shown (i.e., Figs. 3-9) are obtained by averaging over 50 channel realizations unless otherwise specified.
\begin{figure}[]
	\centering
	\setlength{\belowcaptionskip}{0cm}   
	\includegraphics[width=3.3in]{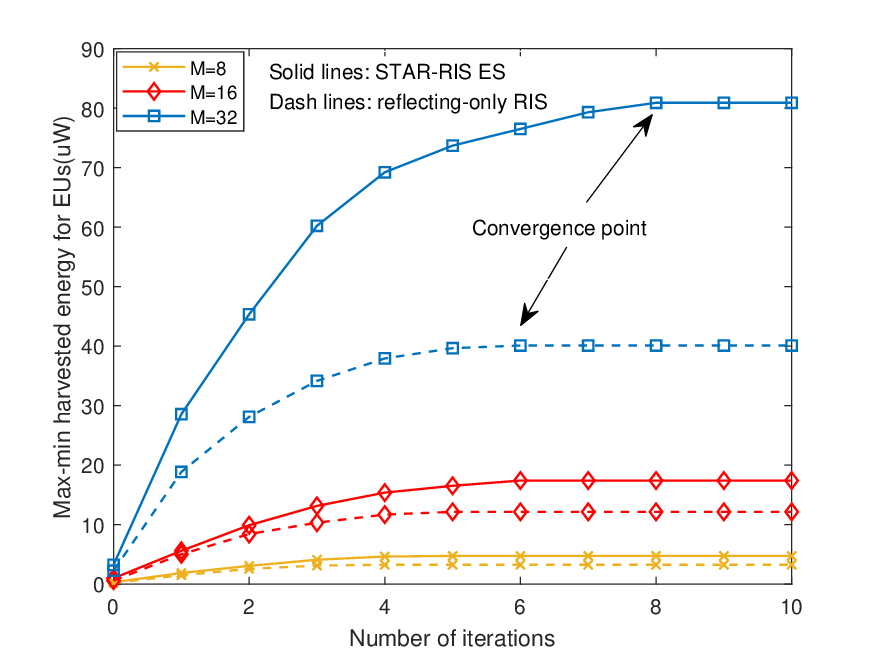}
	\caption{Convergence of \textbf{Algorithm 2}.}
	\label{convergence}
\end{figure}	
\subsection{Convergence of Algorithm 2}
In Fig. \ref{convergence}, we show the convergence behavior of the proposed \textbf{Algorithm 2} for STAR-RIS ES and reflecting-only RIS with different STAR-RIS elements $M$. Specially, we set $N=4$, $\rho_H=\rho_G=0.01$ and $\epsilon=0$. The results obtained for one random channel realization depict that the max-min harvested power of EUs increases quickly as the number of iterations increases, and finally converges to a value within $6$ iterations for $M=8$ and $M=16$. Even when $M=32$, our proposed algorithm can achieve convergence at the $8$-th iteration. However, compared to the reflecting-only RIS, the proposed algorithm for the STAR-RIS converges more slowly. This is expected since the computational complexity increases with more variables to be optimized for the STAR-RIS with larger-scale elements.
	
\subsection{System Performance Versus Number of AP Antennas}
In Fig. \ref{Rate-energy1}, we examine the achievable rate-energy region in relation to the number of AP antennas. We set $M = 8$, and $\rho_G = \rho_H = 0.01$. The results depict that the rate-energy region for all schemes expands with the number of antennas due to the active beamforming gain. In addition, regions obtained by the proposed scheme are larger than those obtained with reflecting-only RISs and conventional RISs because the former can take advantage of more DoFs. Further, regarding the performance for two protocols of STAR-RISs, ES is able to achieve both higher upper boundaries, i.e., $R_{\max}$ and $E_{\max}$, for IUs and EUs, respectively. However, TS can ensure better a performance balance for all users. This can be explained as follows. Compared to TS, ES accommodates all users to utilize the entire time resource for communication or charging, thus enabling a better upper bound on data rate$/$harvested power when only focusing on IUs or EUs. However, the time allocation for TS allows the AP and the STAR-RIS to serve users in only one region during each allocated time period, which reduces competition for communication resources between IUs and EUs as well as interference among IUs. As a result, the decline of harvested power is more moderate as the data rate increases, leading to a better balance between IUs and EUs. 
	
In Fig. \ref{Harvested power1}, we further explore the max-min harvested power for EUs versus the number of AP antennas. We set the max-min data rate of IUs to be $R_{\min}=4$ bit/s/hz, $M=16$, and $\rho_G=\rho_H=\rho=0.01$. As depicted in Fig. \ref{Harvested power1}, the max-min harvested power for all schemes increases with the number of AP antennas, and STAR-RISs outperform reflecting-RISs and conventional RISs. This is expected because, compared with reflecting-RIS, although adopting STAR-RIS leads to energy leakage or time loss for each user, the flexible deployment of STAR-RIS can provide better channel conditions, which can make up for the loss of communication resources. More importantly, the enhanced DoFs exploited by STAR-RISs can further boost desired signals and suppress unwanted ones, thereby achieving a significant performance improvement.
	
\begin{figure}[]
	\subfigure[Rate-energy region.]{\label{Rate-energy1}
		\includegraphics[width= 3.3in]{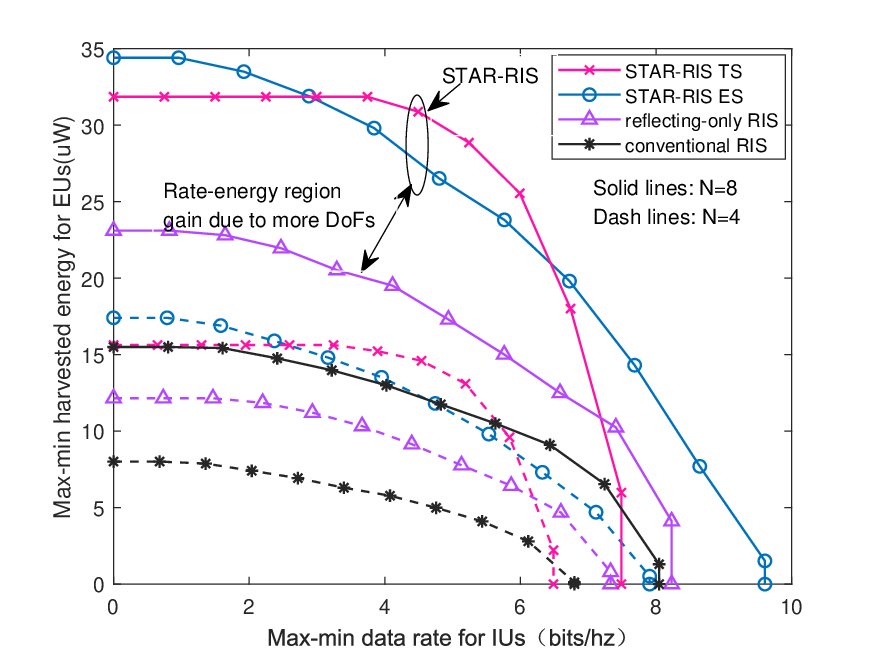}}
	\subfigure[Max-min harvested power.]{\label{Harvested power1}
		\includegraphics[width= 3.3in]{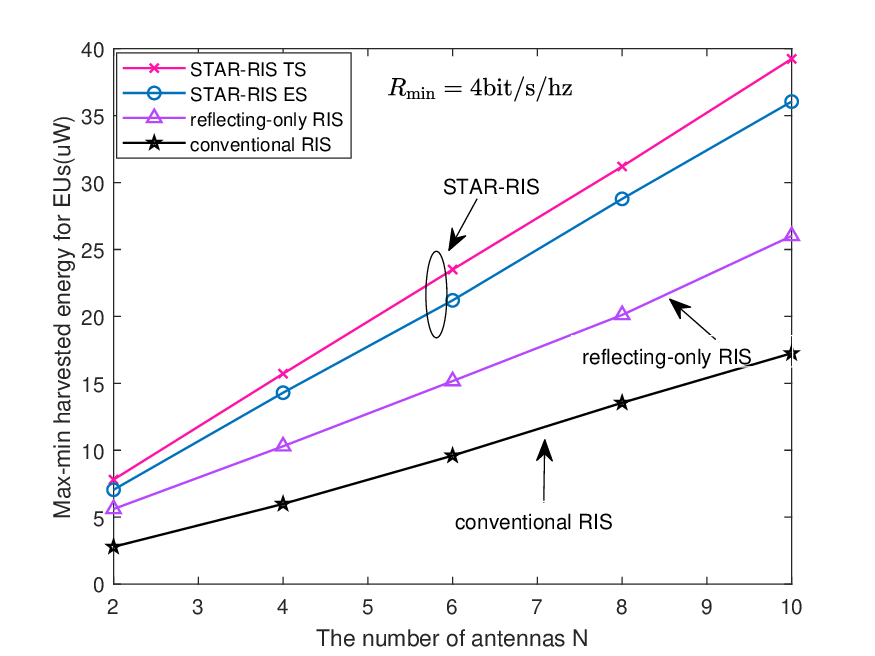}}
	\caption{System performance  versus number of AP antennas for $M=16$, $\rho_G=\rho_H=0.01$} 
\end{figure}
\label{}
	
\subsection{System Performance Versus Number of STAR-RIS Elements}
Fig. \ref{Rate-energy2} shows the achievable rate-energy region versus the number of STAR-RIS elements. We set $N=4$, and $\rho_G=\rho_H=0.01$. Firstly, it is seen that the rate-energy region grows for all schemes as $M$ increases, and the performance gap between the proposed design and the baseline schemes becomes more pronounced. This is because more elements lead to higher transmission/reflection beamforming gains and DoFs benefits. Secondly, the power gain for EUs is more significant than the rate gain for IUs. This can be explained by citing the following causes.  The calculation of the information rate requires a logarithmic operation, which weakens the gain brought by RISs to received SINR. Whereas, the calculation of harvested power does not involve logarithmic operations, thus resulting in the differences in growth rates. Similar reasons can be used to explain the higher performance gains for EUs than IUs via the introduction of STAR-RISs.
	
In Fig. \ref{Harvested power2}, we further investigate the max-min harvested power for EUs versus the number of STAR-RIS elements under the max-min data rate $R_{\min}=4$ bit/s/hz. We set $N=4$, and $\rho_G=\rho_H=0.01$. As can be observed, the max-min harvested power for all schemes increases with the STAR-RIS elements. Particularly, STAR-RISs rise noticeably faster than reflecting-RISs and conventional RISs. This is because the extra DoFs for STAR-RISs can extend the passive beamforming gains by more elements. Besides, the gap between TS and ES increases as $M$ increases. This is made possible by the fact that interference-free communication for TS can make up for the inefficient use of the communication time when the $M$ is large.

\begin{figure}[]
	\centering
	\subfigure[Rate-energy region.]{\label{Rate-energy2}
		\includegraphics[width= 3.3in]{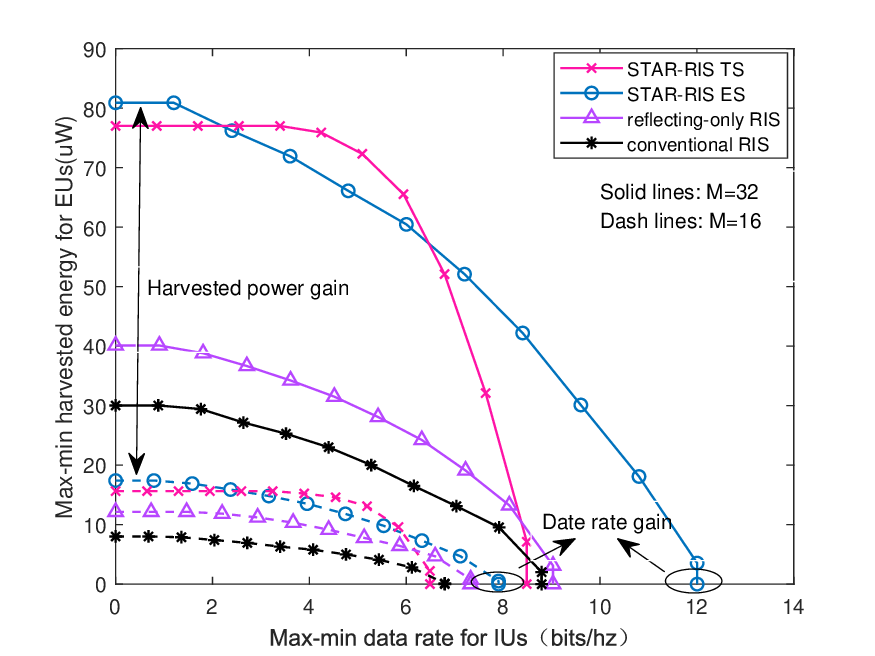}}
	\subfigure[Max-min harvested power.]{\label{Harvested power2}
		\includegraphics[width= 3.3in]{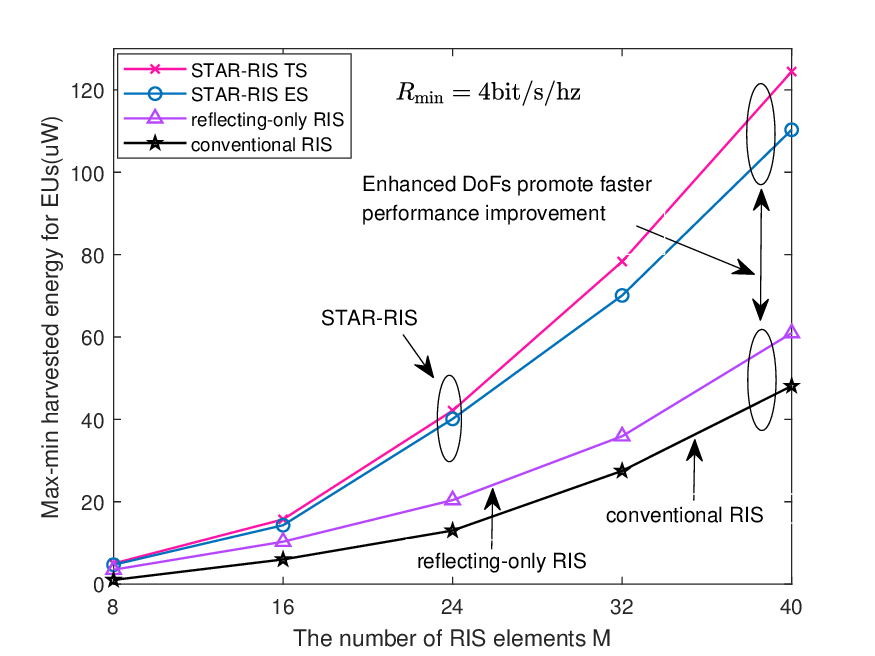}}
	\caption{System performance  versus number of STAR-RIS elements for $N=4$, $\rho_G=\rho_H=0.01$} 
\end{figure}
\label{}
\begin{figure}[]
	\centering
	\subfigure[Rate-energy region.]{\label{Rate-energy3}
		\includegraphics[width= 3.3in]{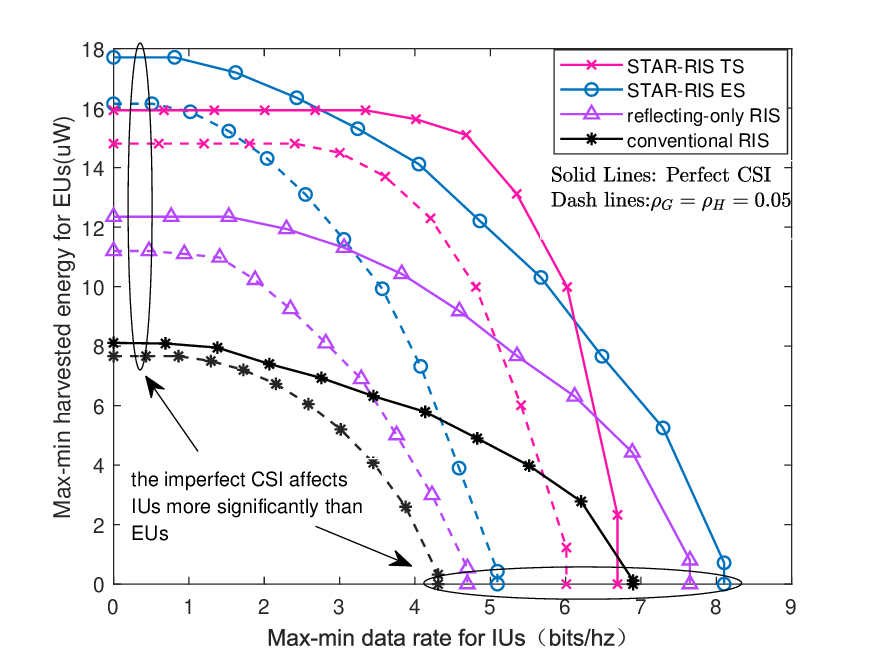}}
	\subfigure[Max-min harvested power.]{\label{Harvested power3}
		\includegraphics[width= 3.2in]{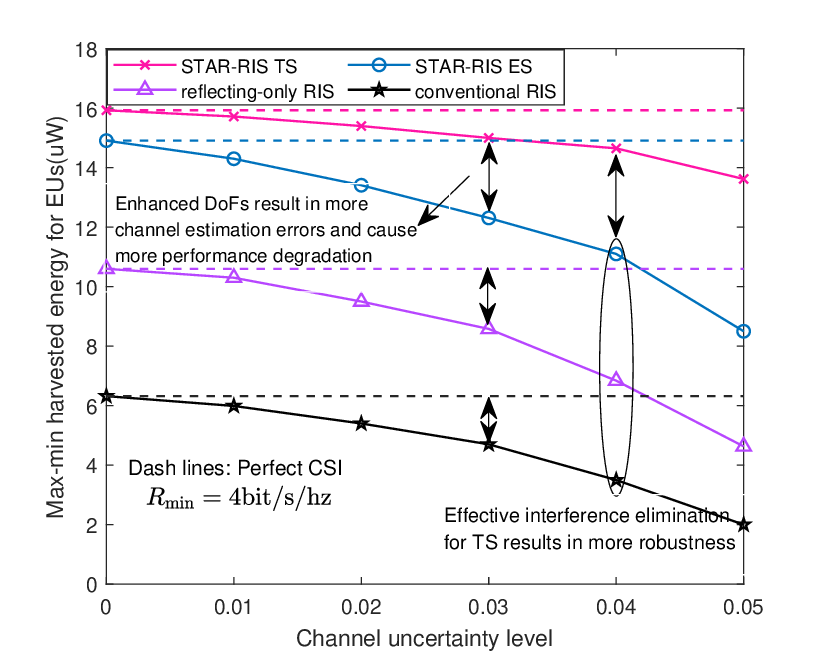}}
	\caption{System performance versus CSI uncertainty levels for $M=16$, $N=4$} 
\end{figure}
\label{CSI uncertainty level}
\subsection{System Performance Versus CSI Uncertainty Levels}
In Fig. \ref{Rate-energy3}, we study the achievable rate-energy region versus the cascaded channel uncertainty levels. We set $M = 16$, $N = 4$, and $\rho_G= \rho_H = \rho$ for the system. Simulation results depict that the performance region for all schemes decreases as $\rho$ increases, and the performance decline of our proposed scheme is more pronounced than the baseline schemes employing reflecting-only RISs and conventional RISs. This can be understood by pointing out that increasing design DoFs leads to a larger channel estimation error and reduces the robustness of user performance. In particular, the channel error has a greater impact on the IUs than the EUs for  STAR-RISs working in ES mode and conventional RISs. This is because, the uncertainty of the channel can not only diminish the desired signal received by IUs like EUs, but also enhance the interference from other IUs.
However, TS is able to minimize the interference between IUs, which leads to more robustness for IUs under imperfect CSI.
	
Fig. \ref{Harvested power3} further investigates the max-min harvested power versus the cascaded channel uncertainty levels under the max-min data rate $R_{\min}=4$ bit/s/hz. We set $M = 16$, $N = 4$, and $\rho_G = \rho_H = \rho$ for the system. As can be observed, the max-min harvested power for all schemes decreases more and more as $\rho$ increases. This is expected since the channel errors affect IUs more than EUs, and as $\rho$ increases to also satisfy $R_{\min}=4$ bit/s/hz, more communication resources need to be allocated to IUs, thus significantly weakening the performance of EUs. While TS can effectively eliminate interference with IUs, which results in more robustness. On the other hand, the performance degradation for ES is more pronounced than for conventional RISs. This is due to the fact that increased DoFs for STAR-RISs will lead to higher channel estimation errors.

\begin{figure}[]
	\centering
	\setlength{\belowcaptionskip}{0cm}   
	\includegraphics[width=3.3in]{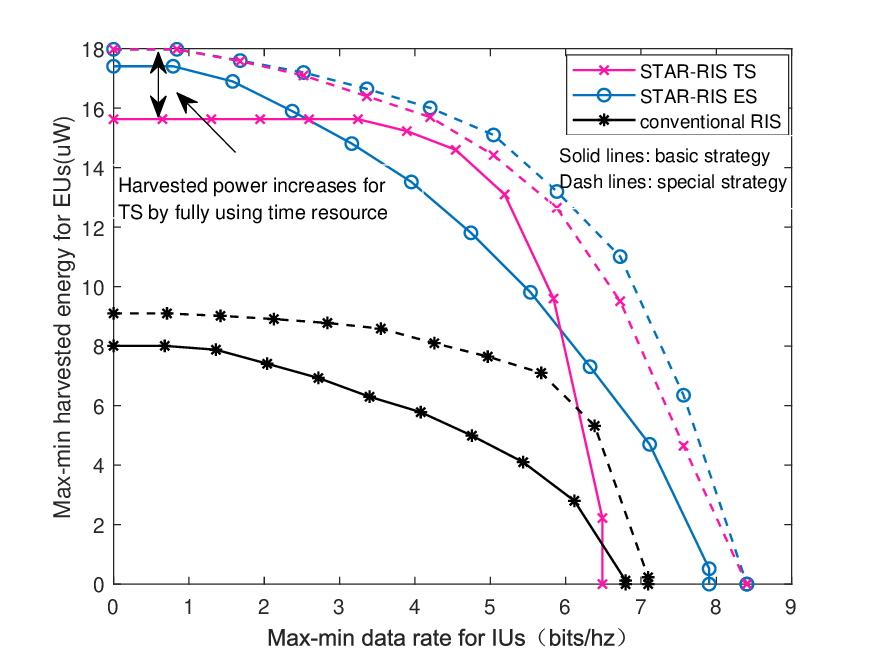}
	\caption{Rate-energy region versus STAR-RIS deployment strategies}
	\label{Deployment}
\end{figure}	
\subsection{Rate-energy Region Versus STAR-RIS Deployment Strategies}
In Fig. \ref{Deployment}, we investigate the impact of STAR-RISs deployment strategies on achievable rate-energy region. In addition to the deployment strategy of the basic scenario described above, here we consider a special deployment strategy of STAR-RISs that makes all IUs and EUs located in the T and R regions, respectively. For fairness of comparison, we set $M = 16$, $N = 4$ and $\rho_G = \rho_H = 0.01$. As can be seen, the special deployment strategy outperforms basic deployment strategy  for all schemes. Following are some explanations for the causes of this. Considering the STAR-RISs ES protocol and conventional RISs, the phase shift design of RISs for the basic deployment strategy needs to balance the requirements of both IU and EU in the same region, but also to suppress the interference from IU located in another region, which cuts the passive beamforming gains. But the special deployment strategy only considers competition between users located in the same region for resource allocation, which reduces the energy leakage for each IU or EU and enhances resource dedication. While for the STAR-RISs TS protocol, more time resources can be allocated for all IUs and EUs by employing the special deployment strategy, which further compensates for the lack of time available in the basic deployment strategy. Therefore, the special deployment strategy results in a better rate-energy region. However, the stringent requirements for user distribution entail higher implementation difficulties, especially as the number of users grows.
	
\subsection{Rate-energy Region Versus Filters at the HAP}
\begin{figure}[]
	\centering
	\setlength{\belowcaptionskip}{0cm}   
	\includegraphics[width=3.3in]{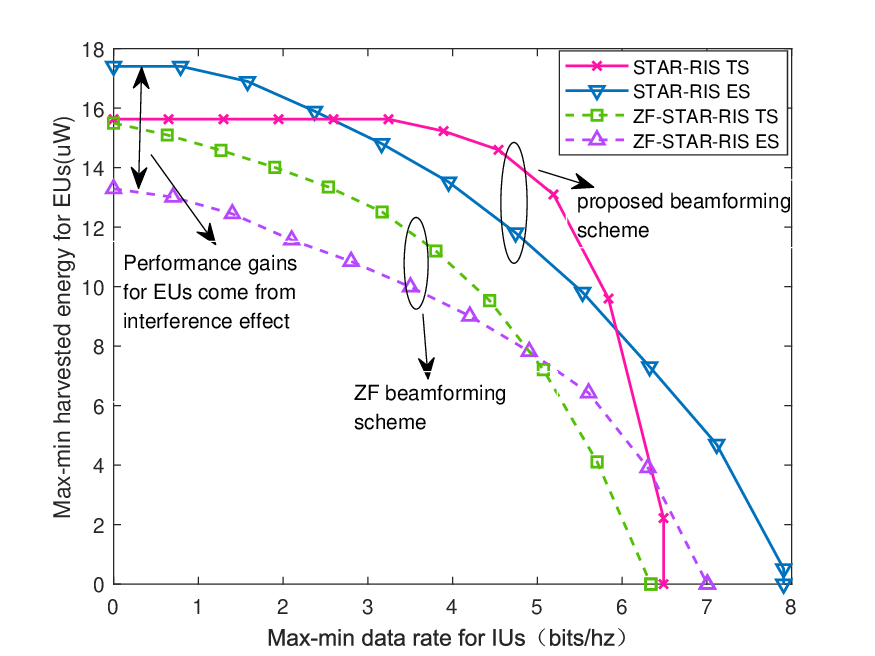}
	\caption{Rate-energy region versus filters at the HAP }
	\label{ZF}
\end{figure}	
In Fig. \ref{ZF}, we study the achievable rate-energy region versus different filters at the HAP. Here, the zero-forcing (ZF) beamforming scheme is introduced as a comparison for our proposed beamforming scheme under both the ES and TS protocols. In the basic scenario, we set $N=4, M=16$, and $\rho_G=\rho_H=0.01$. The Fig. 8 depicts that our proposed scheme outperforms the ZF beamforming scheme for both ES and TS. Moreover, the performance improvement for EUs is greater than that for IUs. The reasons can be explained as follows. When the number of antennas is sufficient, ZF beamforming can fully suppress the interference between users. However, due to stringent interference elimination requirements, ZF beamforming also inevitably reduces the flexibility of communication resource allocation, which results in poorer performance gains for IUs than our proposed beamforming scheme. While for EUs, unlike ZF's obsession with eliminating all interference, our proposed scheme fully exploits the potential of constructive interference and suppresses destructive interference for each EU. As a result, our scheme achieve a more significant performance gain in EUs than ZF beamforming. On the other hand, since the TS protocol has effectively attenuated inter-user interference, there is a smaller performance gap between our proposed scheme and ZF beamforming compared to ES.
	
\subsection{Rate-energy Region Versus  Number of Users}
Fig. \ref{Users} characterizes the achievable rate-energy region versus the number of users. Compared to the basic setup, two new scenarios are considered with 4 EUs, 2 IUs and 2 IUs, 4 EUs, respectively, where we set $M = 16$, $N = 4$ and $\rho_G = \rho_H = 0.01$. For ease of exploration, we assume all IUs and EUs are uniformly located in the T and R regions. As clearly shown in Fig. \ref{Users}, as the number of users increases, the obtained rate-energy region becomes smaller. This is a result of that when a new user accesses the system, in order to ensure fairness among users, the communication resources allocated to each user will decrease. Especially, increasing the number of IUs, the system performance degrades more significantly, as expected. This is because IUs are not only limited by their individual QoS requirements, but can also cause significant inter-user interference to other IUs. This fact also leads to TS outperforming ES when the number of IUs is large. Because TS  provides a good suppression on the inter-user interference, which can impair the impact of lost time resources and thus contribute to achieving higher performance gains.
\begin{figure}[]
	\centering
	\setlength{\belowcaptionskip}{0cm}   
	\includegraphics[width=3.3in]{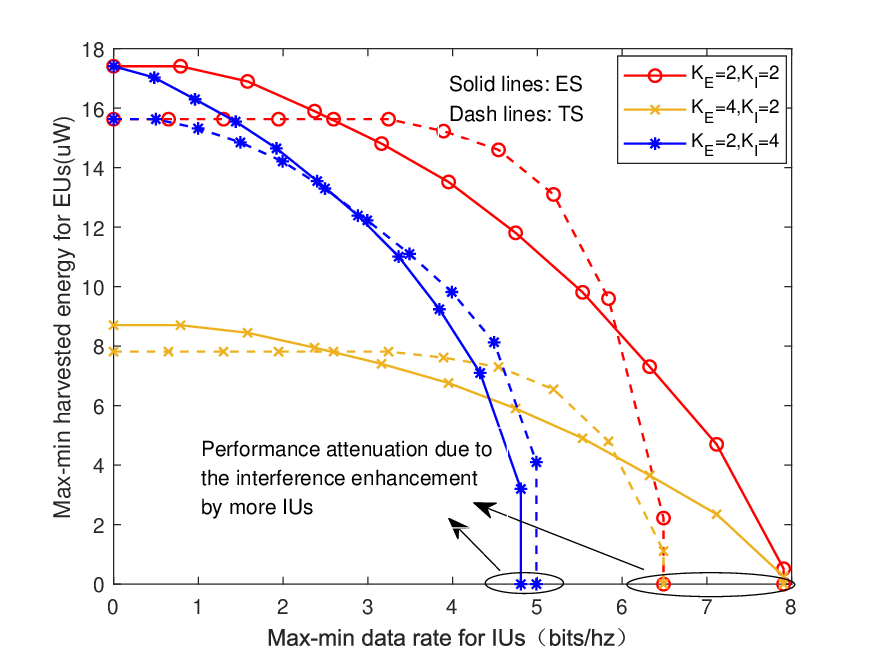}
	\caption{Rate-energy region versus number of users}
	\label{Users}
\end{figure}	
	
\section{Conclusions}
In this paper, the robust resource allocation design for STAR-RIS enabled SWIPT systems was investigated. Based on the assumption of imperfect CSI, two MOOP frameworks for different operating protocols were deployed to study the fundamental trade-off between the max-min data rate and harvested power. For each protocol, the MOOP was first converted into a SOOP by applying the $\epsilon$-constraint method, and then an AO-based algorithm was proposed to explore the robust resource allocation design. Numerical results unveiled that STAR-RISs can greatly outperform conventional reflecting-only RISs in terms of system performance, especially for EUs. Also, 
the ES and TS protocols of STAR-RISs can enhance the SWIPT system from different perspectives. Imagine that, combined with implementation difficulties, these insights will offer helpful suggestions for the resource allocation design of STAR-RIS aided SWIPT systems in practical scenarios.

\section*{Appendix~A: Proof of Theorem 1} \label{Appendix:B}
The relaxed problem in \eqref{P7} is jointly convex with respect to the optimization variables $\{\mathbf{W}_i,\mathbf{V}_j\}$ and satisfies Slater’s constraint qualification \cite{Rank_one_proof}. Therefore, strong duality of  holds. Here, 
we take the Lagrangian function of \eqref{P7} into consideration as $\textup{(39)}$, which is shown at the top of the next page,
\begin{figure*}
	\normalsize
	\begin{align}
		\nonumber
		L&=\eta-\lambda\left(\sum_{i\in\mathcal{K_E}}\mathrm{Tr}(\mathbf{W}_i)+\sum_{j\in\mathcal{K_E}}\mathrm{Tr}(\mathbf{V}_j)-P_{\max}\right)+\sum_{j\in\mathcal{K_E}}\mathrm{Tr}\left(\mathbf{C}_{1,j}\mathbf{S}_{1,j}\right)+\sum_{i\in\mathcal{K_I}}\mathrm{Tr}\left(\mathbf{C}_{2,i}\mathbf{S}_{2,i}\right) \\ 
		&+\sum_{i\in\mathcal{K_I}}\mathrm{Tr}\left(\mathbf{Y}_i\mathbf{W}_i\right)+\sum_{j\in\mathcal{K_E}}\mathrm{Tr}\left(\mathbf{Z}_j\mathbf{V}_j\right)+ \mathbf{\triangle}.
	\end{align}
	\hrulefill \vspace*{0pt}
\end{figure*}
where we denote $\mathbf{C}_{1,j}$ and $\mathbf{C}_{2,i}$ as the abbreviations for LMIs in constraints $\textup{(20)}$ and $\textup{(21)}$, respectively. And $\lambda \geq 0$, $\mathbf{S}_{1,j} \succeq 0$ and $\mathbf{Z}_j \succeq 0, \forall j \in \mathcal{K_E}$, $\mathbf{S}_{2,i} \succeq 0$ and $\mathbf{Y}_i \succeq 0, \forall i \in \mathcal{K_I}$ are the dual variables for constraints $\textup{(15c)}$, $\textup{(20)}$ and $\textup{(15e)}$, $\textup{(21)}$ and $\textup{(15d)}$, respectively. 
Besides, all terms that are unrelated to the proof make up the collection $\triangle$.
Now, we focus on those Karush-Kuhn-Tucker (KKT) conditions with respect to $\mathbf{W}_i$ and $\mathbf{V}_j$, thus obtained as follows:
\begin{subequations}
	\begin{gather}
		\lambda^*\geq 0, \mathbf{S}^{*}_{1,j} \succeq 0, \mathbf{S}^{*}_{2,i} \succeq 0, \mathbf{Y}^{*}_i, \mathbf{Z}^{*}_j \succeq 0,\\
		\mathbf{Y}^{*}_i\mathbf{W}^{*}_i=0, \mathbf{Z}^{*}_j\mathbf{V}^{*}_j=0, \\ \nonumber
		\mathbf{Y}^{*}_i\quad=\lambda^{*}\mathbf{I}_{N}-
		\big(\sum_{l=1}^{M}\left[\text{vec}(\mathbf{H}_i)^{H}\frac{\mathbf{C}^{*}_{2,i}}{\Gamma_i} \textup{vec}(\mathbf{H}_i)\right]_{a:b,c:d}\!\!\!\\ 
		-\sum_{k \neq i}\sum_{l=1}^{M}\left[\text{vec}(\mathbf{H}_i)^{H}\mathbf{C}^{*}_{2,k} \textup{vec}(\mathbf{H}_i)\right]_{a:b,c:d} \! \! \nonumber \\ 
		\quad\quad\quad\quad+\!\sum_{j\in\mathcal{K_E}}\!\!\sum_{l}^{M}\left[\text{vec}(\mathbf{G}_j)^{H}\mathbf{C}^{*}_{1,j} \textup{vec}(\mathbf{G}_j)\right]_{a:b,c:d}\big), \\
		\mathbf{Z}^{*}_j = \lambda^{*}\mathbf{I}_N-\sum_{j\in\mathcal{K_E}}\sum_{l}^{M}\left[\text{vec}(\mathbf{G}_j)^{H}\mathbf{C}^{*}_{1,j} \textup{vec}(\mathbf{G}_j)\right]_{a:b,c:d} ,
	\end{gather}
\end{subequations}
where $a=c=(l-1)N+1$ and $b=d=lN$. As can be observation from $\textup{(40b)}$, the rank of $\mathbf{W}^{*}_i$ is related to the rank of $\mathbf{Y}^{*}_i$. Considering a feasible scenario with $P_{\max}>0$ and $\Gamma_i>0$, $\lambda>0$ and $\mathbf{W}_i \neq \mathbf {0}$ must always hold. 
Then, it can be demonstrated that $\mathrm{Rank}\left(\mathbf{W}^{*}_i\right)=1$ by taking use of the findings in \cite{Rank_one_proof}. Similarly, applying the method of \cite{Rank_one_proof}, we can also prove $\sum_{j=1}^{K_E}\mathrm{Rank}\left(\mathbf{V}^{*}_j\right) \leq 1$. 
\balance
\bibliographystyle{IEEEtran}
\bibliography{mybib.bib}
\balance
\end{document}